\newtheorem{lemma}{Lemma}
\newcommand\l@subroutine{\@dottedtocline{1}{1.5em}{2.3em}}\makeatother
\def\BState{\State\hskip-\ALG@thistlm}
\newcommand{\mU}{\mathcal{U}}
\newcommand{\mS}{\mathcal{S}}
\newcommand{\mP}{\mathcal{P}}
\newcommand{\mN}{\mathcal{N}}
\newcommand{\mF}{\mathcal{F}}
\newcommand{\mX}{\mathcal{X}}
\newcommand{\mG}{\mathcal{G}}
\newcommand{\m}[1]{\mathcal{#1}}
\newcounter{remarkCounter}
\newcommand{\remark}[1]{\stepcounter{remarkCounter} {\bf{Remark \theremarkCounter}}: #1 \qed}
\newcounter{probCounter}
\begin{document}
\title{
\hspace{-0.8cm}Joint Task Offloading and Resource Allocation for \\ \hspace{-0.8cm} Multi-Server Mobile-Edge Computing Networks
\thanks{The authors are with the Department of Electrical and Computer Engineering, Rutgers University--New Brunswick, NJ, USA (e-mail: tuyen.tran@rutgers.edu, pompili@cac.rutgers.edu).}
\thanks{This work was supported by the National Science Foundation~(NSF) Grant No.~CNS-1319945.}}

\author{Tuyen~X.~Tran,~\IEEEmembership{Student Member,~IEEE} and Dario~Pompili,~\IEEEmembership{Senior Member,~IEEE}
%\\\IEEEauthorblockA{Department of Electrical and Computer Engineering\\
%Rutgers University--New Brunswick, NJ, USA\\
%E-mails: \{tuyen.tran, pompili\}@cac.rutgers.edu}
}

% The paper headers
%\markboth{IEEE Journal on Selected Areas in Communications, SI on Emerging Technologies in Software-driven Communication, April 2017}
%{Shell \MakeLowercase{\textit{et al.}}: Bare Advanced Demo of IEEEtran.cls for IEEE Computer Society Journals}

\IEEEcompsoctitleabstractindextext{
%\IEEEtitleabstractindextext{
\justify
\begin{abstract}
Mobile-Edge Computing~(MEC) is an emerging paradigm that provides a capillary distribution of cloud computing capabilities to the edge of the wireless access network, enabling rich services and applications in close proximity to the end users. In this article, a MEC enabled multi-cell wireless network is considered where each Base Station~(BS) is equipped with a MEC server that can assist mobile users in executing computation-intensive tasks via task offloading. The problem of Joint Task Offloading and Resource Allocation~(JTORA) is studied in order to maximize the users' task offloading gains, which is measured by the reduction in task completion time and energy consumption. The considered problem is formulated as a Mixed Integer Non-linear Program~(MINLP) that involves jointly optimizing the task offloading decision, uplink transmission power of mobile users, and computing resource allocation at the MEC servers. Due to the NP-hardness of this problem, solving for optimal solution is difficult and impractical for a large-scale network. To overcome this drawback, our approach is to decompose the original problem into (i)~a Resource Allocation~(RA) problem with fixed task offloading decision and (ii)~a Task Offloading~(TO) problem that optimizes the optimal-value function corresponding to the RA problem. We address the RA problem using convex and quasi-convex optimization techniques, and propose a novel heuristic algorithm to the TO problem that achieves a suboptimal solution in polynomial time. Numerical simulation results show that our algorithm performs closely to the optimal solution and that it significantly improves the users' offloading utility over traditional approaches. 
\end{abstract}
\begin{IEEEkeywords}
Mobile edge computing; computation offloading; multi-server resource allocation; distributed systems. 
\end{IEEEkeywords}}

% make the title area
\maketitle

\IEEEdisplaynotcompsoctitleabstractindextext
%\IEEEdisplaynontitleabstractindextext
\IEEEpeerreviewmaketitle

\thispagestyle{empty}

%\pagestyle{plain} 
%\pagenumbering{arabic}

%----------------------------------------------------
\section{Introduction}
\textbf{Motivation:}
The rapid growth of mobile applications and the Internet of Things~(IoTs) have placed severe demands on cloud infrastructure and wireless access networks such as ultra-low latency, user experience continuity, and high reliability. These stringent requirements are driving the need for highly localized services at the network edge in close proximity to the end users. 
% At the same time, mobile platforms are becoming the predominant medium of access to Internet services due to a tremendous increase in their computation
% and communication capabilities. However, enabling applications that require real-time, in-the-field data collection and processing using mobile platforms is still challenging due to 1) the insufficient computing capabilities and unavailability of complete data on individual mobile devices and 2) the prohibitive communication
% cost and response time involved in offloading data to remote computing resources such as
% cloud datacenters for centralized computation. 
In light of this, the Mobile-Edge Computing (MEC)~\cite{hu2015mobile} concept has emerged, which aims at uniting telco, IT, and cloud computing to deliver cloud services directly from the network edge. Differently from traditional cloud computing systems where remote public clouds are utilized, MEC servers are owned by the network operator and are implemented directly at the cellular Base Stations~(BSs) or at the local wireless Access Points~(APs) using a generic-computing platform. With this position, MEC allows for the execution of applications in close proximity to end users, substantially reducing end-to-end~(e2e) delay and releasing the burden on backhaul networks~\cite{tran2016collaborative}.

% Additionally, MEC offers various network improvement~\cite{tran2016collaborative}, including: (i)~optimization of mobile resources by hosting compute-intensive applications such as image processing and m-gaming at the edge network, (ii)~pre-processing of large data to extract features of interest to be sent to the cloud, and (iii)~context-aware services with the help of network information such as cell load, user location, and allocated bandwidth.

With the emergence of MEC, the ability of resource-constrained mobile devices to offload computation tasks to the MEC servers is expected to support a myriad of new services and applications such as augmented reality, IoT, autonomous vehicles and image processing. For example, the face detection and recognition application for airport security and surveillance can be highly benefit from the collaboration between mobile devices and MEC platform~\cite{soyata2012cloud}. In this scenario, a central authority such as FBI would extend their Amber alerts such that all available cell phones in the area where a missing child was last seen that opt-in to the alert would actively capture images. Due to the significant amount of processing and the need for a large database of images, the captured images are then forwarded to the MEC layer to perform face recognition.

% Owing to the potential capabilities of MEC, offloading computation-intensive tasks from the resource-constrained mobile devices to the MEC servers is expected to support 

% Owing to the potential capabilities of the MEC, offloading computation tasks from the resource-constrained mobile devices to the MEC servers has become a promising solution to improve the overall users' Quality of Experience~(QoE), including the execution delay and device energy consumption~\cite{sardellitti2015joint, chen2015decentralized, mao2016dynamic, lyu2016multi}.

Task offloading, however, incurs extra overheads in terms of delay and energy consumption due to the communication required between the devices and the MEC server in the uplink wireless channels. Additionally, in a system with a large number of offloading users, the finite computing resources at the MEC servers considerably affect the task execution delay~\cite{yang2015multi}. Therefore, offloading decisions and performing resource allocation become a critical problem toward enabling efficient computation offloading. Previously, this problem has been partially addressed by optimizing either the offloading decision~\cite{yang2015multi, cardellini2016game}, communication resources~\cite{chen2015decentralized, chen2016efficient}, or computing resources~\cite{yang2013framework, rahimi2013music}. Recently, Sardellitti et al.~\cite{sardellitti2015joint} addressed the joint allocation of radio and computing resources, while the authors in~\cite{lyu2016multi} considered the joint task offloading and resources optimization in a multi-user system. Both of these works, however, only concentrate on a system with a single MEC server.

% Recently, the deployment of ultra-dense network is envisioned as a key solution to address the coverage and capacity problem in 5G mobile networks~\cite{ge20165g}. However, the densification of small-cell BSs will make inter-cell interference (ICI) more prominent, limiting the network performance. Therefore, task offloading in multi-cell ultra-dense network with the presence of ICI needs to be investigated in order to further improve user experience. 

\textbf{Our Vision:}
Unlike the traditional approaches mentioned above, our objective is to design a holistic solution for joint task offloading and resource allocation in a multi-server MEC-assisted network so as to maximize the users' offloading gains. Specifically, we consider a multi-cell ultra-dense network where each BS is equipped with a MEC sever to provide computation offloading services to the mobile users. The distributed deployment of the MEC servers along with the densification of (small cell) BSs---as foreseen in the 5G standardization roadmap~\cite{ge20165g}---will pave the way for real proximity, ultra-low latency access to cloud functionalities. Additionally, the benefits brought by a multi-server MEC system over the single-server MEC (aka single-cloud) system are multi-fold: 
% Such distributed deployment of the MEC servers is aligned with the scenario envisioned by ETSI~\cite{hu2015mobile} and offers multi-fold benefits. 
(i)~firstly, as each MEC server may be overloaded when serving a large number of offloading users, one can release the burdens on that server by directing some users to offload to the neighboring servers from the nearby BSs, thus preventing the limited resources on each MEC server from becoming the bottle neck; (ii) secondly, each user can choose to offload its task to the BS with more favorable uplink channel condition, thus saving transmission energy consumption; (iii)~finally, coordination of resource allocation to offload users across multiple neighboring BSs can help mitigate the effect of interference and resource contention among the users and hence, improve offloading gains when multiple users offload their tasks simultaneously. 

\textbf{Challenges and Contributions:}
To exploit in full the benefits of computation offloading in the considered multi-cell, multi-server MEC network, there are several key challenges that need to be addressed. Firstly, the radio resource allocation is much more challenging than the special cases studied in the literature (cf.~\cite{lyu2016multi}) due to the presence of inter-cell interference that introduces the coupling among the achievable data rate of different users, which makes the problem nonconvex. Secondly, the complexity of the task-offloading decision is high as, for each user, one needs to decide not only whether it should offload the computation task but also which BS/server to offload the task to. Thirdly, the optimization model should take into account the inherent heterogeneity in terms of mobile devices' computing capabilities, computation task requirements, and availability of computing resources at different MEC servers. 

In this context, the main contributions of this article are summarized as follows.
\begin{itemize}
\item We model the offloading utility of each user as the weighted-sum of the improvement in task-completion time and device energy consumption; we formulate the problem of Joint Task Offloading and Resource Allocation~(JTORA) as a Mixed Integer Non-linear Program~(MINLP) that jointly optimizes the task offloading decisions, users' uplink transmit power, and computing resource allocation to offloaded users at the MEC servers, so as to maximize the system offloading utility. 
\item Given the NP-hardness of the JTORA problem, we propose to decompose the problem into (i)~a Resource Allocation~(RA) problem with fixed task offloading decision and (ii)~a Task Offloading~(TO) problem that optimizes the optimal-value function corresponding to the RA problem.
\item We further show that the RA problem can be decoupled into two independent problems, namely the Uplink Power Allocation~(UPA) problem and the Computing Resource Allocation~(CRA) problem; the resulting UPA and CRA problems are addressed using quasi-convex and convex optimization techniques, respectively.
\item We propose a novel low-complexity heuristic algorithm to tackle the TO problem and show that it achieves a suboptimal solution in polynomial time. 
\item We carry out extensive numerical simulations to evaluate the performance of the proposed solution, which is shown to be near-optimal and to improve significantly the users' offloading utility over traditional approaches. 
\end{itemize}

\textbf{Article Organization:}
The remainder of this article is organized as follows. 
In Sect.~\ref{sec:related}, we review the related works. 
In Sect.~\ref{sec:model}, we present the system model. The joint task offloading and resource allocation problem is formulated in Sect.~\ref{sec:problem}, followed by the NP-hardness proof and decomposition of the problem itself. We present our proposed solution in Sect.~\ref{sec:res_alloc} and numerical results in Sect.~\ref{sec:results}. Finally, in Sect.~\ref{sec:conclusion} we conclude the article.

%------------------------------------------------
\section{Related Works} \label{sec:related}
The MEC paradigm has attracted considerable attention in both academia and industry over the past several years. In 2013, Nokia Networks introduced the very first real-world MEC platform~\cite{nokiaRACS}, in which the computing platform---Radio Applications Cloud Servers~(RACS)---is fully integrated with the Flexi Multiradio BS. Saguna also introduced their fully virtualized MEC platform, so called Open-RAN~\cite{sagunaMEC}, that can provide an open environment for running third-party MEC applications. Recently, a MEC Industry Specifications Group~(ISG) was formed to standardize and moderate the adoption of MEC within the RAN~\cite{hu2015mobile}.

% From the theoretical perspectives, There have been a number of solutions proposed to exploit

A number of solutions have also been proposed to exploit the potential benefits of MEC in the context of the IoTs and 5G. For instance, our previous work in~\cite{tran2016collaborative} proposed to explore the synergies among the connected entities in the MEC network and presented three representative use-cases to illustrate the benefits of MEC collaboration in 5G networks. In~\cite{tran2017wons}, we proposed a collaborative caching and processing framework in a MEC network whereby the MEC servers can perform both caching and transcoding so as to facilitate Adaptive Bit-Rate (ABR) video streaming. Similar approach was also considered in \cite{fajardo2015improving} which combined the traditional client-driven dynamic adaptation scheme, DASH, with network-assisted adaptation capabilities. In addition, MEC is also seen as a key enabling technique for connected vehicles by adding computation and geo-distributed services to the roadside BSs so as to analyze the data from proximate vehicles and roadside sensors and to propagate messages to the drivers in very low latency~\cite{nokiaCar2x}. 
% propagate the hazard warnings and latency-sensitive messages within a $20~\rm{ms}$ end-to-end delay, allowing the drivers to react immediately. 
% Recently, computation offloading is seen as one of the key technologies that can benefit from MEC paradigm. 

Recently, several works have focused on exploiting the benefits of computation offloading in MEC network~\cite{mao2017mobile}. Note that similar problems have been investigated in conventional Mobile Cloud Computing (MCC) systems~\cite{sanaei2014heterogeneity}. However, a large body of existing works on MCC assumed an infinite amount of computing resources available in a cloudlets, where the offloaded tasks can be executed with negligible delay~\cite{zhang2013energy,zhang2015collaborative, cheng2015just}. The problem of offloading scheduling was then reduced to radio resource allocation in~\cite{chen2015decentralized} where the competition for radio resources is modeled as a congestion game of selfish mobile users. In the context of MEC, the problem of joint task offloading and resource allocation was studied in a single-user system with energy harvesting devices~\cite{mao2016dynamic}, and in a multi-cell multi-user systems~\cite{sardellitti2015joint}; however the congestion of computing resources at the MEC server was omitted. Similar problem is studied in~\cite{lyu2016multi} considering the limited edge computing resources in a single-server MEC system. 

In summary, most of the existing works did not consider a holistic approach that jointly determines the task offloading decision and the radio and computing resource allocation in a multi-cell, multi-server system as considered in this article.

\section{System Model} \label{sec:model}

\begin{figure}
\centering
\includegraphics[width=0.6\textwidth]{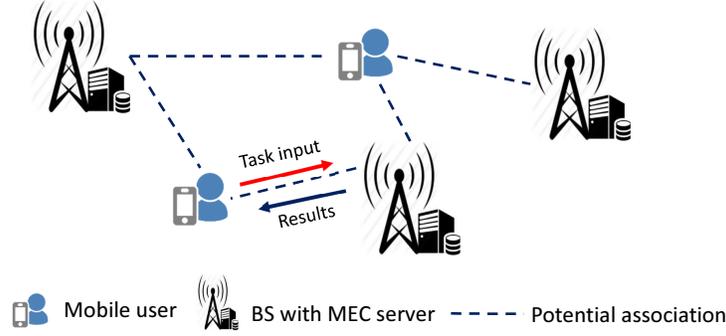}
\caption{Example of a cellular system with MEC servers deployed at the BSs.}\label{fig:MEC}
\end{figure}

We consider a multi-cell, multi-server MEC system as illustrated in Fig.~\ref{fig:MEC}, in which each BS is equipped with a MEC server to provide computation offloading services to the resource-constrained mobile users such as smart phones, tablets, and wearable devices. In general, each MEC server can be either a physical server or a virtual machine with moderate computing capabilities provisioned by the network operator and can communicate with the mobile devices through wireless channels provided by the corresponding BS. Each mobile user can choose to offload computation tasks to a MEC server from one of the nearby BSs it can connect to. We denote the set of users and MEC servers in the mobile system as $\mU = \left\{ {1,2,...,U} \right\}$ and $\mS = \left\{ {1,2,...,S} \right\}$, respectively. For ease of presentation, we will refer to the MEC server $s$, server $s$, and BS $s$ interchangeably. The modeling of user computation tasks, task uploading transmissions, MEC computation resources, and offloading utility are presented here below.

\subsection{User Computation Tasks}
We consider that each user $u\in\mU$ has one computation task at a time, denoted as $T_u$, that is atomic and cannot be divided into subtasks. 
% We consider that each user $u\in\mU$ has a computation task $T_u$ that it wishes to offload one of the MEC servers. We also consider that each user only has one task at a time, and the task can only be offloaded to one MEC server. We also consider that each task is atomic, and cannot be divided into subtasks. 
Each computation task $T_u$ is characterized by a tuple of two parameters, $\left\langle {{d_u},{c_u}} \right\rangle$, in which $d_u~\rm[bits]$ specifies the amount of input data necessary to transfer the program execution (including system settings, program codes, and input parameters) from the local device to the MEC server, and $c_u~\rm[cycles]$ specifies the workload, i.e., the amount of computation to accomplish the task. The values of $d_u$ and $c_u$ can be obtained through carefully profiling of the task execution~\cite{miettinen2010energy, yang2015multi}. Each task can be performed locally on the user device or offloaded to a MEC server. By offloading the computation task to the MEC server, the mobile user would save its energy for task execution; however, it would consume additional time and energy for sending the task input in the uplink.

% \emph{\underline{Local computation}}: 
Let $f_u^l > 0$ denote the local computing capability of user $u$ in terms of CPU $\rm{cycles/s}$. Hence, if user $u$ executes its task locally, the task completion time is $t_u^l = \frac{c_u}{f_u^l}~\rm[seconds]$. 
% If user $u$ executes its task locally, the task completion time, $t_u^l~\rm[seconds]$, can be given by,
% \begin{equation}
% t_u^l = \frac{c_u}{f_u^l}.
% \end{equation}
To calculate the energy consumption of a user device when executing its task locally, we use the widely adopted model of the energy consumption per computing cycle as $\m{E} = \kappa {f^2}$~\cite{chen2015decentralized, wen2012energy}, where $\kappa$ is the energy coefficient depending on the chip architecture and $f$ is the CPU frequency. Thus, the energy consumption, $E_{u}^l\left[ J \right]$, of user $u$ when executing its task $T_u$ locally, is calculated as,
\begin{equation}
E_u^l = \kappa {\left( {f_{u}^l} \right)^2} c_u.
\end{equation}
% The CPU power consumption (in $\rm{watts}$) of the user device can be calculated based on a widely used function of execution frequency, expressed as~\cite{chen2015decentralized, lin2015task}, 
% % According to~\cite{chen2015decentralized, lin2015task}, CPU power consumption (in $\rm{watts}$) of the user device can be given by a function of execution frequency, expressed as,
% \begin{equation}
% p_u^l = \alpha {\left( {f_{u}^l} \right)^2}, \forall u \in \mU, 
% \end{equation}
% in which typically $\alpha = 10^{-11}$. Thus, the energy consumption of user $u$, $E_{u}^l\left[ J \right]$, when executing task $T_u$ locally is calculated as,
% \begin{equation}
% E_u^l = t_u^lp_u^l = \alpha f_u^l{c_u}, \forall u \in \mU.
% \end{equation}

%%%%%%%%%%%%%%%%%%%%%%%%%%%%%%%%%%%%%%%%%%%%%%%%%%%%
\subsection{Task Uploading}
% \emph{\underline{Remote computation}}: 
In case user $u$ offloads its task $T_u$ to one of the MEC servers, the incurred delay comprises: (i)~the time $t_{\text{up}}^u~\rm[s]$ to transmit the input to the MEC server on the uplink, (ii)~the time $t_{\text{exe}}^u~\rm[s]$ to execute the task at the MEC server, and (iii)~the time to transmit the output from the MEC server back to the user on the downlink. Since the size of the output is generally much smaller than the input, plus the downlink data rate is much higher than that of the uplink, we omit the delay of transferring the output in our computation, as also considered in~\cite{lyu2016multi, chen2015decentralized}. 

% \blue{We define the offloading decision variables as ${x_{us}} \in \left\{ {0,1} \right\},\forall u \in \mU, s \in \mS$, where $x_{us} = 1$ if task $T_u$ from user $u$ is offloaded to BS $s$, and $x_{us} = 0$ otherwise. Furthermore, let $\mX = \left\{ {{x_{us}}\left| {u \in \mU,s \in \mS} \right.} \right\}$ denote the offloading policy. As each task can be either executed locally or offloaded to at most one MEC server, a feasible offloading policy must satisfy the constraint below, 
% \begin{equation} \label{eq:off_con}
% \sum\limits_{s \in \mS} {{x_{us}}}  \leq 1, \forall u \in \mU.
% \end{equation}
% Additionally, we denote ${\mU_s} = \left\{ {u \in \mU\left| {{x_{us}} = 1} \right.} \right\}$ as the set of users offloading their tasks to server $s$, and ${\mU_{{\rm{off}}}} = \bigcup\limits_{s \in S} {{\mU_s}}$ as the set of users that offload their tasks to one of the servers.}

In this work, we consider the system with OFDMA as the multiple access scheme in the uplink~\cite{dahlman20134g}, in which the operational frequency band $B$ is divided into $N$ equal sub-bands of size $W = B/N~\rm[Hz]$. To ensure the orthogonality of uplink transmissions among users associated with the same BS, each user is assigned to one sub-band. Thus, each BS can serve at most $N$ users at the same time. Let $ \mN = \left\{ {1,...,N} \right\}$ be the set of available sub-band at each BS. We define the task offloading variables, which also incorporate the uplink sub-band scheduling, as $x_{us}^j, u \in \mU, s\in \mS, j\in \mN$, where $x_{us}^j = 1$ indicates that task $T_u$ from user $u$ is offloaded to BS $s$ on sub-band $j$, and $x_{us}^j= 0$ otherwise. We define the ground set $\mG$ that contains all the task offloading variables as $\mG = \left\{ {x_{us}^j\left| u \in \mU, s\in \mS, j\in \mN \right.} \right\}$ and the task offloading policy $\mX$ expressed as $\mX = \left\{ {x_{us}^j \in \mG \left| {x_{us}^j = 1} \right.} \right\}$.
% \blue{We define the task offloading policy, which also incorporates the sub-band assignment in the uplink channels, as $\mX = \left\{ {x_{us}^j\left| u \in \mU, s\in \mS, j\in \mN \right.} \right\}$, in which $x_{us}^j = 1$ indicates that task $T_u$ from user $u$ is offloaded to BS $s$ on sub-band $j$, and $x_{us}^j = 0$, otherwise.} 
As each task can be either executed locally or offloaded to at most one MEC server, a feasible offloading policy must satisfy the constraint below, 
\begin{equation} \label{eq:off_con}
\sum\limits_{s \in \mS} {\sum\limits_{j \in \mN} {x_{us}^j} }  \le 1,, \forall u \in \mU.
\end{equation}
Additionally, we denote ${\mU_s} = \left\{ {u \in \mU\left| {\sum\nolimits_{j \in \mN} {x_{us}^j}  = 1} \right.} \right\}$ as the set of users offloading their tasks to server $s$, and ${\mU_{{\rm{off}}}} = \bigcup\limits_{s \in \mS} {{\mU_s}}$ as the set of users that offload their tasks.

Furthermore, we consider that each user and BS have a single antenna for uplink transmissions (as also considered in~\cite{saad2014college,lyu2016multi}). Extension to the case where each BS uses multiple antennas for receiving uplink signals will be addressed in a future work. Denote $h_{us}^j$ as the uplink channel gain between user $u$ and BS $s$ on sub-band $j$, which captures the effect of path-loss, shadowing, and antenna gain. Note that the user-BS association usually takes place in a large time scale (duration of an offloading session) that is much larger than the time scale of small-scale fading. Hence, similar to~\cite{ye2013user}, we consider that the effect of fast-fading is averaged out during the association. Let $\mP = \left\{ {{p_u}\left| {0 < {p_u} \le {P_u},u \in \mU_{\rm{off}} } \right.} \right\}$ denote the users' transmission power, where $p_u~\rm[W]$ is the transmission power of user $u$ when uploading its task's input $I_u$ to the BS, subject to a maximum budget $P_u$. Note that $p_u = 0, \forall u \notin \mU_{\rm{off}}$.
% In this paper, we consider the system with OFDMA as the multiple access scheme in the uplink~\cite{dahlman20134g}, in which users in the same cells are assigned to orthogonal sub-carriers. Let $B~\rm[Hz]$ be the system bandwidths, and $N > 0$ be the number of orthogonal subcarriers. Thus, the bandwidth of each subcarrier is $W = B/N~\rm[Hz]$. 
As the users transmitting to the same BS use different sub-bands, the uplink intra-cell interference is well mitigated; still, these users suffer from the inter-cell interference. In this case, the Signal-to-Interference-plus-Noise Ratio~(SINR) from user $u$ to BS $s$ on sub-band $j$ is given by,  
% In this paper, we consider the single-carrier frequency-division multiple access (SC-FDMA) which is adopted in LTE for uplink transmission~\cite{myung2006single, dahlman20134g}, whereby the users' uplink transmissions in the same cells are orthogonalized by explicit assignments of groups of DFT-precoded orthogonal subcarriers. Let $B~\rm[Hz]$ be the system bandwidths, and $N > 0$ be the number of orthogonal subcarriers, the bandwidth allocated to each user is therefore $W = B/N~\rm[Hz]$. In this case, the uplink intra-cell interference is well mitigated, and the Signal-to-Interference-plus-Noise Ratio (SINR) on the uplink from user $u$ to BS $s$ is given by,
%\begin{equation} \label{eq:SINR}
%{\gamma _{us}} = \frac{{{p_u}{h_{us}}}}{{\sum\limits_{k \in \mU\backslash \left\{ u \right\}} {x_{ks}{p_k}{h_{ks}}}  + {\sigma ^2}}}, \forall u \in \mU, s \in \mS,
%\end{equation}
\begin{equation} \label{eq:SINR}
{\gamma _{us}^j} = \frac{{{p_u}{h_{us}^j}}}{{\sum\limits_{k \in \mU\backslash \mU_s} {x_{ks}^j{p_k}{h_{ks}^j}}  + {\sigma ^2}}}, \forall u \in \mU, s \in \mS, j \in \mN,
\end{equation}
where $\sigma^2$ is the background noise variance and the first term at the denominator is the accumulated intra-cell interference from all the users associated with other BSs on the same sub-band $j$. Since each user only transmits on one sub-band, the achievable rate $\rm[bits/s]$ of user $u$ when sending data to BS $s$ is given as,
% \begin{equation}
% {R_{us}} = \frac{W}{\sum\limits_{k \in \mU} {{x_{ks}}}} {\log _2}\left( {1 + {\gamma _{us}}} \right)
% \end{equation}
% \begin{equation} \label{eq:R_us}
% {R_{us}} = W{\log _2}\left( {1 + {\gamma _{us}}} \right).
% \end{equation}
\begin{equation} \label{eq:R_us}
{R_{us}} = W{\log _2}\left( {1 + \gamma_{us}} \right),
\end{equation}
where ${\gamma _{us}} = \sum\nolimits_{j \in \mN} {\gamma _{us}^j} $. Moreover, let ${x _{us}} = \sum\nolimits_{j \in \mN} {x_{us}^j}, \forall u \in \mU, s \in \mS $. Hence, the transmission time of user $u$ when sending its task input $d_u$ in the uplink can be calculated as,
% \begin{equation} \label{eq:d_up}
% d_{\text{up}}^{us} = \frac{{{d_u}\sum\limits_{k \in \mU} {{x_{ks}}}}}{{W\sum\limits_{s \in \mS} {{{\log }_2}\left( {1 + {\gamma _{us}}} \right)} }}, \forall u \in \mU, s \in \mS.
% \end{equation}
% \begin{equation} \label{eq:t_up}
% t_{{\rm{up}}}^u = \sum\limits_{s \in \mS} {\sum\limits_{j \in \mN} {\frac{{x_{us}^j{d_u}}}{{R_{us}^j}}} } , \forall u \in \mU.
% \end{equation}
\begin{equation} \label{eq:t_up}
t_{{\rm{up}}}^u = \sum\limits_{s \in \mS} {\frac{{{x_{us}}{d_u}}}{{{R_{us}}}}} , \forall u \in \mU.
\end{equation}

%%%%%%%%%%%%%%%%%%%%%%%%%%%%%%%%%%%%%%%%%%%%%%%%%%%%
\subsection{MEC Computing Resources}
The MEC server at each BS is able to provide computation offloading service to multiple users concurrently. The computing resources made available by each MEC server to be shared among the associating users are quantified by the computational rate $f_s$, expressed in terms of number of CPU $\rm{cycles/s}$. 
After receiving the offloaded task from a user, the server will execute the task on behalf of the user and, upon completion, will return the output result back to the user. We define the computing resource allocation policy as $\m{F} = \left\{ {{f_{us}}\left| u \in \mU, s\in \mS\right.} \right\}$, in which $f_{us}~\rm[cycles/s] > 0$ is the amount of computing resource that BS $s$ allocates to task $T_u$ offloaded from user $u \in \mU_s$. Hence, clearly $f_{us} = 0, \forall u \notin \mU_s$. In addition, a feasible computing resource allocation policy must satisfy the computing resource constraint, expressed as,
% Since the total computing resources at each BS $s$ is $f_s$, a feasible computing resource allocation policy must satisfy the computing resource constraint, expressed as,
\begin{equation}
\sum\limits_{u \in \mU} {{f_{us}}}  \leq  {f_s}, \forall s \in \mS.
\end{equation}
Given the computing resource assignment $\left\{ {{f_{us}},s \in \mS} \right\}$, the execution time of task $T_u$ at the MEC servers is,
\begin{equation} \label{eq:d_exe}
t_{{\rm{exe}}}^u = \sum\limits_{s \in \mS} {\frac{{{x_{us}}{c_u}}}{{{f_{us}}}}}, \forall u \in \mU.
\end{equation}
% \begin{equation} \label{eq:d_exe}
% t_{{\rm{exe}}}^u = \sum\limits_{s \in \mS} {\left( {\frac{{{c_u}}}{{{f_{us}}}}\sum\limits_{j \in \mN} {x_{us}^j} } \right)}, \forall u \in \mU.
% \end{equation}

\subsection{User Offloading Utility}
Given the offloading policy $\mX$, the transmission power $p_u$, and the computing resource allocation $f_{us}$'s, the total delay experienced by user $u$ when offloading its task is given by,
\begin{equation} \label{eq:tu}
{t_u} = t_{{\rm{up}}}^u + t_{{\rm{exe}}}^u = \sum\limits_{s \in \mS} {{x_{us}}\left( {\frac{{{d_u}}}{{{R_{us}}}} + \frac{{{c_u}}}{{{f_{us}}}}} \right)}, \forall u \in \mU.
\end{equation}
% \begin{equation} \label{eq:tu}
% {t_u} = t_{{\rm{up}}}^u + t_{{\rm{exe}}}^u = \sum\limits_{s \in \mS} {\sum\limits_{j \in \mN} {x_{us}^j\left( {\frac{{{d_u}}}{{R_{us}^j}} + \frac{{{c_u}}}{{{d_{us}}}}} \right)} } , \forall u \in \mU.
% \end{equation}
% \begin{equation} \label{eq:delay}
% {t_u} = \sum\limits_{s \in \mS} {{x_{us}}\left( {d_{{\rm{up}}}^{us} + d_{{\rm{exe}}}^{us}} \right)}, \forall u \in \mU. 
% \end{equation}
The energy consumption of user $u$, $E_u\left[J \right]$, due to uploading transmission is calculated as $E_u = \frac{{{p_u}t_{{\rm{up}}}^u}}{{{\xi _u}}}, \forall u \in \mU$, where $\xi_u$ is the power amplifier efficiency of user $u$. Without loss of generality, we assume that $\xi_u = 1, \forall u \in \mU$. Thus, the uplink energy consumption of user $u$ simplifies to,
\begin{equation} \label{eq:energy}
E_u = p_ut_{\rm{up}}^u = {p_u}{d_u}\sum\limits_{s \in \mS} {\frac{{{x_{us}}}}{{{R_{us}}}}}  , \forall u \in \mU.
\end{equation}

In a mobile cloud computing system, the users' QoE is mainly characterized by their task completion time and energy consumption. In the considered scenario, the relative improvement in task completion time and energy consumption are characterized by ${\frac{{t_u^l - {t_u}}}{{t_u^l}}}$ and ${\frac{{E_u^l - {E_u}}}{{E_u^l}}}$, respectively~\cite{lyu2016multi}. Therefore, we define the offloading utility of user $u$ as,
% \begin{equation} \label{eq:user_util}
% {J_u} = {\beta _u^t\frac{{t_u^l - {t_u}}}{{t_u^l}} + \beta _u^e\frac{{E_u^l - {E_u}}}{{E_u^l}}} , \forall u \in \mU,
% \end{equation}
% \begin{equation} \label{eq:user_util}
% {J_u} = \left[ {\beta _u^t\frac{{t_u^l - {t_u}}}{{t_u^l}} + \beta _u^e\frac{{E_u^l - {E_u}}}{{E_u^l}}} \right]^+, \forall u \in \mU,
% \end{equation}
\begin{equation} \label{eq:user_util}
{J_u} = \left( {\beta _u^t\frac{{t_u^l - {t_u}}}{{t_u^l}} + \beta _u^e\frac{{E_u^l - {E_u}}}{{E_u^l}}} \right) \sum\limits_{s \in \mS} {{x_{us}}} , \forall u \in \mU,
\end{equation}
in which $\beta _u^t,\beta _u^e \in \left[ {0,1} \right]$, with $\beta_u^t + \beta_u^e = 1, \forall u \in \mU$, specify user $u$'s preference on task completion time and energy consumption, respectively. For example, a user $u$ with short battery life can increase  $\beta _u^e$ and decrease $\beta _u^t$ so as to save more energy at the expense of longer task completion time. Note that offloading too many tasks to the MEC servers will cause excessive delay due to the limited bandwidth and computing resources at the MEC servers, and consequently degrade some users' QoE compared to executing their tasks locally. Hence, clearly user $u$ should not offload its task to the MEC servers if $J_u \leq 0$. 
% Therefore, we incorporate this condition into the constraint in (\ref{eq:off_con}) and rewrite it as,
% \begin{equation}
% \sum\limits_{x \in \mS} {{x_{us}}}  \le {\bf{1}}\left( {{J_u} > 0} \right), \forall u \in \mU,
% \end{equation}
% in which ${\bf{1}}\left( { \cdot} \right)$ is the indicator function. 

The expressions of the task completion time and energy consumption in~\eqref{eq:user_util} clearly shows the interplay between radio access and computational aspects, which motivates a joint optimization of offloading scheduling, radio, and computing resources so as to optimize users' offloading utility. 
% We are now ready to formulate the joint task offloading and resource allocation optimization problem in the next section.

%%%%%%%%%%%%%%%%%%%%%%%%%%%%%%%%%%%%%%%
\section{Problem Formulation} \label{sec:problem}
We formulate here the problem of joint task offloading and resource allocation, followed by the outline of our decomposition approach.

\subsection{Joint Task Offloading and Resource Allocation Problem}
For a given offloading decision $\mX$, uplink power allocation $\mP$, and computing resource allocation $\mF$, we define the system utility as the weighted-sum of all the users' offloading utilities,
\begin{equation} \label{eq:sys_util}
J\left( {\m{X}, \mP,\mF} \right) = \sum\nolimits_{u \in \mU} {{\lambda _u}{J_u}},
\end{equation}
with $J_u$ given in~\eqref{eq:user_util} and ${\lambda _u} \in \left( {0,1} \right]$ specifying the resource provider's preference towards user $u$, $\forall u \in \mU$. For instance, depending on the payments offered by the users, the resource provider could prioritize users with higher revenues for offloading by increasing their corresponding preferences. With this position, we formulate the Joint Task Offloading and Resource Allocation~(JTORA) problem as a system utility maximization problem, i.e.,
% \begin{subequations} \label{eq:prob}
% \begin{align} \label{eq:prob_a}
% \mathop {\max }\limits_{\m{X},\mP,\mF} \hspace{0.2cm} &J\left( {\m{X}, \mP,\mF} \right) \\ \label{eq:prob_b}
% \rm{s.t.} \hspace{0.2cm} &{x_{us}} \in \left\{ {0,1} \right\}, \forall u \in \mU, s \in \mS, \\ \label{eq:prob_c}
% &\sum\nolimits_{s \in \mS} {{x_{us}}} \leq {\bf{1}}\left( {{J_u} > 0} \right), \forall u \in \mU, \\ \label{eq:prob_d}
% &\sum\nolimits_{u \in \mU} {{x_{us}}}  \le N, \forall s \in \mS, \\ \label{eq:prob_e}
% &0 < p_u \leq P_u, \forall u \in \mU_{\rm{off}}, \\ \label{eq:prob_f}
% &f_{us} > 0, \forall u \in \mU_s, s\in \mS, \\ \label{eq:prob_g}
% &\sum\nolimits_{u \in \mU} {{f_{us}}}  \le {f_s}, \forall s \in \mS.
% \end{align}
% \end{subequations}
\begin{subequations} \label{eq:prob}
\begin{align} \label{eq:prob_a}
\mathop {\max }\limits_{\m{X},\mP,\mF} \hspace{0.2cm} &J\left( {\m{X}, \mP,\mF} \right) \\ \label{eq:prob_b}
\rm{s.t.} \hspace{0.2cm} &{x_{us}^j} \in \left\{ {0,1} \right\}, \forall u \in \mU, s \in \mS, j \in \mN,\\ \label{eq:prob_c}
&\sum\limits_{s \in \mS} {\sum\limits_{j \in \mN} {x_{us}^j} }  \le 1, \forall u \in \mU, \\ \label{eq:prob_d}
&\sum\limits_{u \in \mU} {{x_{us}^j}}  \le 1, \forall s \in \mS, j \in \mN,\\ \label{eq:prob_e}
&0 < p_u \leq P_u, \forall u \in \mU_{\rm{off}}, \\ \label{eq:prob_f}
&f_{us} > 0, \forall u \in \mU_s, s\in \mS, \\ \label{eq:prob_g}
&\sum\nolimits_{u \in \mU} {{f_{us}}}  \le {f_s}, \forall s \in \mS.
\end{align}
\end{subequations}
The constraints in the formulation above can be explained as follows: constraints~\eqref{eq:prob_b} and \eqref{eq:prob_c} imply that each task can be either executed locally or offloaded to at most one server on one sub-band; constraint~\eqref{eq:prob_d} implies that each BS can serve at most one user per sub-band; constraint~\eqref{eq:prob_e} specifies the transmission power budget of each user; finally, constraints~\eqref{eq:prob_f} and \eqref{eq:prob_g} state that each MEC server must allocate a positive computing resource to each user associated with it and that the total computing resources allocated to all the associated users must not excess the server's computing capacity. 
The JTORA problem in~\eqref{eq:prob} is a Mixed Integer Nonlinear Program~(MINLP), which can be shown to be NP-hard; hence, finding the optimal solution usually requires exponential time complexity~\cite{pochet2006production}. Given the large number of variables that scale linearly with the number of users, MEC servers, and sub-bands, our goal is to design a low-complexity, suboptimal solution that achieves competitive performance while being practical to implement.

\subsection{Problem Decomposition}
Given the high complexity of the JTORA problem due to the combinatorial nature of the task offloading decision, our approach in this article is to temporarily fix the task offloading decision $\mX$ and to address the resulting problem, referred to as the Resource Allocation~(RA) problem. The solution of the RA problem will then be used to derive the solution of the original JTORA problem. The decomposition process is described as follows.
% By exploiting the structure of the objective function and constraints in the formulation of JTORA problem in (\ref{eq:prob}), we observe that this problem has a separable structure. In particular, if we temporarily fix the binary variables $\left\{ {{x_{us}}} \right\}$, problem (\ref{eq:prob}) can be decomposed into multiple subproblems with separated objective and constraints. Leveraging this characteristic and motivated by the approach in~\cite{tran2014achievable}, we can employ Tammer decomposition method~\cite{tammer1987application} to transform an original problem with high complexity into an equivalent \emph{master problem} and a set of \emph{subproblems} with significantly lower complexity. The decomposition process is described as follows.
Firstly, we rewrite the JTORA problem in (\ref{eq:prob}) as,
\begin{subequations} \label{eq:prob1}
\begin{align} \label{eq:prob1_a}
&\mathop {\max }\limits_{\mX} \left( {\mathop {\max }\limits_{\mP,\mF} J\left( {\m{X}, \mP,\mF} \right)  } \right) \\ \label{eq:prob1_b}
&\rm{s.t.} \hspace{0.2cm} \eqref{eq:prob_b}-\eqref{eq:prob_g}. 
\end{align}
\end{subequations}
Note that the constraints on the offloading decision, $\m{X}$, in~\eqref{eq:prob_b}, \eqref{eq:prob_c}, \eqref{eq:prob_d}, and the RA policies, $\mP, \mF$, in~\eqref{eq:prob_e}, \eqref{eq:prob_f}, \eqref{eq:prob_g}, are decoupled from each other; therefore, solving the problem in~\eqref{eq:prob1} is equivalent to solving the following Task Offloading~(TO) problem,
\begin{subequations} \label{eq:master}
\begin{align} \label{eq:master_a}
&\mathop {\max }\limits_{\mX} J^*\left( {\m{X}} \right)  \\ \label{eq:master_b}
&\rm{s.t.} \hspace{0.2cm} \eqref{eq:prob_b}, \eqref{eq:prob_c}, \eqref{eq:prob_d},
\end{align}
\end{subequations}
in which $J^*\left( {\m{X}} \right)$ is the optimal-value function corresponding to the RA problem, written as,
\begin{subequations} \label{eq:prob_RA}
\begin{align} \label{eq:prob_RA_a}
& J^*\left( {\m{X}} \right) = \mathop {\max }\limits_{\mP,\mF}  J\left( {\mX,\mP,\mF} \right)  \\ \label{eq:prob_RA_b}
&\rm{s.t.} \hspace{0.2cm} \eqref{eq:prob_e}, \eqref{eq:prob_f}, \eqref{eq:prob_g},
\end{align}
\end{subequations}
In the next section, we will present our solutions to both the RA problem and the TO problem so as to finally obtain the solution to the original JTORA problem.

%%%%%%%%%%%%%%%%%%%%%%%%%%%%%%%%%%%%%%%%%%%%%%%%%%
\section{Low-complexity Algorithm for Joint Task Offloading and Resource Allocation} \label{sec:res_alloc}
We present now our low-complexity approach to solve the JTORA problem by solving first the RA problem in~\eqref{eq:prob_RA} and then using its solution to derive the solution of the TO problem in~\eqref{eq:master}. 

% In this section, given a feasible task offloading decision $\mX$, we address the resource allocation problem in (\ref{eq:prob_RA}), which involves finding the optimized users' transmit power policy $\mP$ and the MEC computing resource allocation $\mF$. 
% Firstly, from (\ref{eq:user_util}) and (\ref{eq:sys_util}), we can rewrite the objective function in (\ref{eq:prob_RA_a}) as,
% \begin{subequations}
% \begin{align}
% &J\left( {\mX,\mP,\mF} \right) = \sum\limits_{s \in \mS} {\sum\limits_{u \in {\mU_s}} {{\lambda _u}{J_u}} } \\
% &= \sum\limits_{s \in \mS} {\sum\limits_{u \in {\mU_s}} {{\lambda _u}\left( {\beta _u^t + \beta _u^e} \right)} }  - \sum\limits_{s \in \mS} {\sum\limits_{u \in {\mU_s}} {{\lambda _u}\left( {\frac{{\beta _u^t{t_u}}}{{t_u^l}} + \frac{{\beta _u^e{E_u}}}{{E_u^l}}} \right)} } 
% \end{align}
% \end{subequations}

Firstly, given a feasible task offloading decision $\mX$ that satisfies constraints~\eqref{eq:prob_b}, \eqref{eq:prob_c}, and \eqref{eq:prob_d}, and using the expression of $J_u$ in~\eqref{eq:user_util}, the objective function in~\eqref{eq:prob_RA_a} can be rewritten as,
\begin{equation} \label{eq:sys_util_new}
J\left( {\mX,\mP,\mF} \right) = \sum\limits_{s \in \mS} {\sum\limits_{u \in {\mU_s}} {{\lambda _u}\left( {\beta _u^t + \beta _u^e} \right)} }  - V\left( {\mX,\mP,\mF} \right),
\end{equation}
% where 
\begin{equation} \label{eq:V}
\hspace{-.00cm} \text{where} \hspace{.4cm} \left( {\mX,\mP,\mF} \right) = \sum\limits_{s \in \mS} {\sum\limits_{u \in {\mU_s}} {{\lambda _u}\left( {\frac{{\beta _u^t{t_u}}}{{t_u^l}} + \frac{{\beta _u^e{E_u}}}{{E_u^l}}} \right)} }.
\end{equation}
%
% \begin{subequations}
% \begin{align}
% V\left( {\mX,\mP,\mF} \right) &= \sum\limits_{s \in \mS} {\sum\limits_{u \in {\mU_s}} {{\lambda _u}\left( {\frac{{\beta _u^t{t_u}}}{{t_u^l}} + \frac{{\beta _u^e{E_u}}}{{E_u^l}}} \right)} } \\
% &=
% \end{align}
% \end{subequations}
%
% We can now rewrite the objective function in (\ref{eq:prob_RA_a}) as,
% \begin{equation} \label{eq:sys_util_new}
% J\left( {\mX,\mP,\mF} \right) = \sum\limits_{s \in \mS} {\sum\limits_{u \in {\mU_s}} {{\lambda _u}\left[ {\underbrace {\left( {\beta _u^t + \beta _u^e} \right)}_{{\rm{(I)}}} - \underbrace {\left( {\frac{{\beta _u^t{t_u}}}{{t_u^l}} + \frac{{\beta _u^e{E_u}}}{{E_u^l}}} \right)}_{{\rm{(II)}}}} \right]} }. 
% \end{equation}
We observe that the first term on the right hand side~(RHS) of \eqref{eq:sys_util_new} is constant for a particular offloading decision, while $V\left( {\mX,\mP,\mF} \right)$ can be seen as the total offloading overheads of all offloaded users. Hence, we can recast~\eqref{eq:prob_RA} as the problem of minimizing the total offloading overheads, i.e.,
% \begin{subequations} \label{eq:p_overhead}
% \begin{align} 
% &\mathop {\min }\limits_{\mP,\mF}  \sum\limits_{s \in \mS} {\sum\limits_{u \in {\mU_s}} {{\lambda _u}\left( {\frac{{\beta _u^t{t_u}}}{{t_u^l}} + \frac{{\beta _u^e{E_u}}}{{E_u^l}}} \right)} }   \\ 
% &\rm{s.t.} \hspace{0.2cm} (\ref{eq:prob_e}), (\ref{eq:prob_f}), (\ref{eq:prob_g}).
% \end{align}
% \end{subequations}
\begin{subequations} \label{eq:p_overhead}
\begin{align} 
&\mathop {\min }\limits_{\mP,\mF} V\left( {\mX,\mP,\mF} \right)   \\ \label{eq:p_overhead_b}
&\rm{s.t.} \hspace{0.2cm} \eqref{eq:prob_e}, \eqref{eq:prob_f}, \eqref{eq:prob_g}.
\end{align}
\end{subequations}
Furthermore, from~\eqref{eq:tu}, \eqref{eq:energy}, and \eqref{eq:V}, we have,
\begin{equation} \label{eq:V_new}
\hspace{-0.1cm} V\left( {\mX,\mP,\mF} \right) = \sum\limits_{s \in \mS} {\sum\limits_{u \in {\mU_s}} {\frac{{{\phi _u} + {\psi _u}{p_u}}}{{{{\log }_2}\left( {1 + {\gamma _{us}}} \right)}}} }  + \sum\limits_{s \in \mS} {\sum\limits_{u \in {\mU_s}} {\frac{{{\eta _u}}}{{{f_{us}}}}} },
\end{equation}
% \begin{equation} \label{eq:V_new}
% \hspace{-0.1cm} V\left( {\mX,\mP,\mF} \right) = \sum\limits_{s \in \mS} {\sum\limits_{u \in {\mU_s}} {\frac{{{\phi _u} + {\psi _u}{p_u}}}{{\sum\limits_{j \in \mN} {{{\log }_2}\left( {1 + \gamma _{us}^j} \right)} }}} }  + \sum\limits_{s \in \mS} {\sum\limits_{u \in {\mU_s}} {\frac{{{\eta _u}}}{{{f_{us}}}}} },
% \end{equation}
in which, for simplicity, ${\phi _u} = \frac{{{\lambda _u}\beta _u^t{d_u}}}{{t_u^lW}}$, ${\psi _u} = \frac{{{\lambda _u}\beta _u^e{d_u}}}{{E_u^lW}}$, and ${\eta _u} = {\lambda _u}\beta _u^tf_u^l$. Notice from~\eqref{eq:p_overhead_b} and \eqref{eq:V_new} that the problem in \eqref{eq:p_overhead} has a \emph{separable structure}, i.e., the objectives and constraints corresponding to the power allocation $p_u$'s and computing resource allocation $f_{us}$'s can be decoupled from each other. Leveraging this property, we can decouple problem~\eqref{eq:p_overhead} into two independent problems, namely the \emph{Uplink Power Allocation~(UPA)} and the \emph{Computing Resource Allocation~(CRA)}, and address them separately, as described in the following sections.

\subsection{Uplink Power Allocation~(UPA)}
The UPA problem is decoupled from problem~\eqref{eq:p_overhead} by considering the first term on the RHS of~\eqref{eq:V_new} as the objective function. Specifically, the UPA problem is expressed as,
\begin{subequations} \label{eq:UPA}
\begin{align}  \label{eq:UPA_a}
&\mathop {\min }\limits_{\mP} \sum\limits_{s \in \mS} {\sum\limits_{u \in {\mU_s}} {\frac{{{\phi _u} + {\psi _u}{p_u}}}{{{{\log }_2}\left( {1 + {\gamma _{us}}} \right)}}} }   \\ \label{eq:UPA_b}
&{\rm{s.t.}} \hspace{0.2cm} 0 < p_u \leq P_u, \forall u \in \mU.
\end{align}
\end{subequations}
Problem~\eqref{eq:UPA} is non-convex and difficult to solve because the uplink SINR $\gamma_{us}^j$ corresponding to user $u\in \mU_s$ depends on the transmit power of the other users associated with other BSs on the same sub-band $j$ through the inter-cell interference
% $I_{us}^{{\rm{inter}}} = \sum\limits_{w \in \mS\backslash \left\{ s \right\}} {\sum\limits_{k \in {\mU_w}} {{p_k}{h_{ks}}} }$, 
$I_{us}^j = \sum\limits_{w \in \mS\backslash \left\{ s \right\}} {\sum\limits_{k \in {\mU_w}} {x_{ks}^j{p_k}{h_{ks}^j}} }$, 
as seen in~\eqref{eq:SINR}. Our approach is to find an approximation for $I_{us}^j$ and thus for ${\gamma_{us}^j}$ such that problem~\eqref{eq:UPA} can be decomposed into sub-problems that, in turn, can be efficiently solved. The optimal uplink power allocation $\mP^*$ still generates small objective value for~\eqref{eq:UPA}. Suppose each BS $s\in \mS$ calculates its uplink power allocation independently, i.e., without mutual cooperation, and informs its associated users about the uplink transmit power; then, an achievable upper bound for $I_{us}^j$ is given by,
\begin{equation} \label{eq:Iapprox}
\tilde I_{us}^j \buildrel \Delta \over = \sum\limits_{w \in \mS \backslash \left\{ s \right\}} {\sum\limits_{k \in {\mU_w}} {x_{ks}^j {P_k}{h_{ks}^j}} } , \forall u \in \mU_s, s\in \mS, j \in \mN.
\end{equation}
Similar to~\cite{du2014wireless}, we argue that $\tilde I_{us}^j$ is a good estimate of $I_{us}^j$ since our offloading decision $\mX$ is geared towards choosing the appropriate user-BS associations so as that $\tilde I_{us}^j$ be small in the first place. This means that a small error in $I_{us}^j$ should not lead to large bias in $\gamma_{us}^j$~\cite{du2014wireless}. 

By replacing $I_{us}^j$ with $\tilde I_{us}^j$, we get the approximation for the uplink SINR for user $u$ uploading to BS $s$ on sub-band $j$ as,
% that only depends on the uplink power allocation of users associated with BS $s$, expressed as,
\begin{equation}
{{\tilde \gamma }_{us}^j} = \frac{{{p_u}{h_{us}^j}}}{{\tilde I_{us}^j + {\sigma ^2}}}, \forall u \in \mU_s, s \in \mS, j\in \mN.
\end{equation}
Let ${\vartheta _{us}} = \sum\limits_{j \in \mN} {h_{us}^j/\left( {\tilde I_{us}^j + {\sigma ^2}} \right)}$ and ${\Gamma _s}\left( {{p_u}} \right) = \frac{{{\phi _u} + {\psi _u}{p_u}}}{{{{\log }_2}\left( {1 + {\vartheta _{us}}{p_u}} \right)}}$.
% ${\vartheta _{us}^j} = \frac{{{h_{us}^j}}}{{\tilde I_{us}^j + {\sigma ^2}}}$ and ${\Gamma _s}\left( {{p_u}} \right) = \frac{{{\phi _u} + {\psi _u}{p_u}}}{{\sum\nolimits_{j \in \mN} {{{\log }_2}\left( {1 + \vartheta _{us}^j{p_u}} \right)} }}$. 
The objective function in~\eqref{eq:UPA_a} can now be approximated by $\sum\limits_{s \in \mS} {\sum\limits_{u \in {\mU_s}} {\Gamma _s}\left( {{p_u}} \right)  } $. With this position, it can be seen that the objective function and the constraint corresponding to each user's transmit power is now decoupled from each other. Therefore, the UPA problem in~\eqref{eq:UPA} can be approximated by $\sum\nolimits_{s \in \mS} {\left| {{\mU_s}} \right|}$ sub-problems, each optimizing the transmit power of a user $u\in \mU_s, s\in \mS$, and can be written as,
\begin{subequations} \label{eq:UPAapprox}
\begin{align} 
&\mathop {\min } {\sum\limits_{u \in {\mU_s}} {\Gamma _s}\left( {{p_u}} \right)  }  \\  \label{eq:UPAapprox_b}
&{\rm{s.t.}} \hspace{0.2cm} 0 < p_u \leq P_u.
\end{align}
\end{subequations}
Problem~\eqref{eq:UPAapprox} is still non-convex as the second-order derivative of the objective function with respect to (w.r.t) $p_u$, i.e., $\Gamma _s''\left( {{p_u}} \right)$, is not always positive. However, we can employ quasi-convex optimization technique to address problem~\eqref{eq:UPAapprox} based on the following lemma.

\begin{lemma}  \label{lem:quasi}
$\Gamma _s\left( {{p_u}} \right)$ is strictly quasi-convex in the domain defined in~\eqref{eq:UPAapprox_b}.
\end{lemma}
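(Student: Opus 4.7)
The plan is to reduce strict quasi-convexity to a monotonicity statement on $\Gamma_s'(p_u)$. A continuously differentiable function on an interval is strictly quasi-convex whenever it is either strictly monotone throughout, or there is an interior point $p^\ast$ such that the function is strictly decreasing on $(0,p^\ast]$ and strictly increasing on $[p^\ast,P_u]$. Hence the goal is to show that $\Gamma_s'(p_u)$ changes sign at most once on $(0,P_u]$, and when it does, from negative to positive.

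First I would apply the quotient rule to $\Gamma_s(p_u) = (\phi_u + \psi_u p_u)/\log_2(1+\vartheta_{us} p_u)$ and observe that the denominator of $\Gamma_s'(p_u)$, namely $[\log_2(1+\vartheta_{us}p_u)]^2$, is strictly positive for all $p_u > 0$. Thus the sign of $\Gamma_s'(p_u)$ is controlled entirely by its numerator, which I denote $N(p_u)$; concretely,
\[
N(p_u) \;=\; \psi_u \log_2(1+\vartheta_{us}p_u) \;-\; \frac{\vartheta_{us}(\phi_u+\psi_u p_u)}{(1+\vartheta_{us}p_u)\ln 2}.
\]

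Next I would study $N(p_u)$ directly. Evaluating at the left endpoint gives $\lim_{p_u\to 0^+}N(p_u) = -\vartheta_{us}\phi_u/\ln 2 \le 0$, with strict inequality in the generic case $\phi_u > 0$. Differentiating once more and expanding via the quotient rule, the cross terms cancel and the expression collapses to $N'(p_u) = \vartheta_{us}^2(\phi_u+\psi_u p_u)/[(1+\vartheta_{us}p_u)^2 \ln 2]$, which is strictly positive on the domain since all four constants $\phi_u,\psi_u,\vartheta_{us},p_u$ are nonnegative with $\phi_u+\psi_u p_u > 0$. Consequently $N$ is strictly increasing, starts non-positive, and therefore vanishes at most once; equivalently, $\Gamma_s'$ changes sign at most once and does so from negative to positive.

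Finally I would conclude by splitting into two cases. If $N(P_u) \le 0$, then $\Gamma_s$ is strictly decreasing throughout $(0,P_u]$; if $N(P_u) > 0$, there is a unique $p^\ast \in (0,P_u)$ with $N(p^\ast)=0$, and $\Gamma_s$ is strictly decreasing on $(0,p^\ast]$ and strictly increasing on $[p^\ast,P_u]$. Either shape implies strict quasi-convexity via a direct check of a convex combination $\lambda p_1+(1-\lambda)p_2$ against $\max\{\Gamma_s(p_1),\Gamma_s(p_2)\}$, splitting by whether the combination lies in the decreasing piece or the increasing piece. The only mildly delicate step is the algebraic simplification of $N'(p_u)$: a priori the two rational terms coming from differentiating $N$ could have competing signs, and it is the clean cancellation of the cross terms $\psi_u\vartheta_{us}(1+\vartheta_{us}p_u)$ that makes $N'$ manifestly positive and drives the whole argument.
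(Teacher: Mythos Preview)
Your argument is correct, but it follows a different route from the paper. The paper invokes the second-order sufficient condition for strict quasi-convexity from~\cite{boyd2004convex}: it computes $\Gamma_s''(p_u)$ explicitly, assumes $\Gamma_s'(\bar p_u)=0$, substitutes the resulting relation into $\Gamma_s''(\bar p_u)$, and verifies that the latter is strictly positive. You instead work purely at the first-derivative level, showing that the numerator $N(p_u)$ of $\Gamma_s'(p_u)$ is strictly increasing with $N(0^+)\le 0$, so that $\Gamma_s$ is either strictly monotone or strictly decreasing-then-increasing on $(0,P_u]$; either shape is strictly quasi-convex. Your approach is more elementary in that it avoids the somewhat heavier computation of $\Gamma_s''$, and it has the pleasant side effect of delivering, as a by-product, exactly the two facts $\Omega_s(0)<0$ and $\Omega_s'(p_u)>0$ that the paper establishes separately after the lemma to justify its bisection Algorithm~\ref{alg:bisection}. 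The paper's route, on the other hand, appeals to a standard textbook criterion and so is shorter to state once the derivatives are in hand.
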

\begin{proof}
See Appendix.
\end{proof}
In general, a quasi-convex problem can be solved using the bisection method, which solves a convex feasibility problem in each iteration~\cite{boyd2004convex}. However, the popular interior cutting plane method for solving a convex feasibility problem requires $\m{O}\left( {{n^2}/{\varepsilon ^2}} \right)$ iterations, where $n$ is the dimension of the problem. We now propose to further reduce the complexity of the bisection method.  

Firstly, notice that a quasi-convex function achieves a local optimum at the diminishing point of the first-order derivative, and that any local optimum of a strictly quasi-convex function is the global optimum~\cite{bereanu1972quasi}. Therefore, based on Lemma~\ref{lem:quasi}, we can confirm that the optimal solution $p_u^*$ of problem~\eqref{eq:UPAapprox} either lies at the constraint border, i.e., $p_u^* = P_u$ or satisfies $\Gamma _s'\left( {{p_u^*}} \right) = 0$. It can be verified that $\Gamma _s'\left( {{p_u}} \right) = 0$ when,
\begin{equation} \label{eq:Omega}
\hspace{-0.05cm} {\Omega _s}\left( {{p_u}} \right) = {\psi _u}{\log _2}\left( {1 + {\vartheta _{us}}{p_u}} \right) - \frac{{{\vartheta _{us}}\left( {{\phi _u} + {\psi _u}{p_u}} \right)}}{{\left( {1 + {\vartheta _{us}}{p_u}} \right)\ln 2}} = 0. 
\end{equation}
Moreover, we have, ${\Omega _s'}\left( {{p_u}} \right) = \frac{{\vartheta _{us}^2\left( {{\phi _u} + {\psi _u}{p_u}} \right)}}{{{{\left( {1 + {\vartheta _{us}}{p_u}} \right)}^2}\ln 2}} > 0$, and ${\Omega _s}\left( 0 \right) =  - \frac{{{\vartheta _{us}}{\phi _u}}}{{\ln 2}} < 0$. This implies that ${\Omega _s}\left( {{p_u}} \right)$ is a monotonically increasing function and is negative at the starting point $p_u = 0$. Therefore, we can design a low-complexity bisection method that evaluates ${\Omega _s}\left( {{p_u}} \right)$ in each iteration instead of solving a convex feasibility problem, so as to obtain the optimal solution $p_u^*$, as presented in Algorithm~\ref{alg:bisection}.

\begin{algorithm} 
\caption{Bisection Method for Uplink Power Allocation} \label{alg:bisection}
\renewcommand{\Statex}{\item[\hphantom{\bfseries Step \arabic{ALG@line}.}]}
\begin{algorithmic}[1]
\State Calculate ${\Omega _s}\left( {{P_u}} \right)$ using~\eqref{eq:Omega}
\If{${\Omega _s}\left( {{P_u}} \right) \leq 0$}
	\State $p_u^* = P_u$
\Else 
	\State Set optimality tolerance $\epsilon > 0$
	\State Initialize $p_u' = 0$ and $p_u'' = P_u$
    \Repeat
    	\State Set $p_u^* = \left( {{p_u'} + {p_u''}} \right)/2$
        \If {${\Omega _s}\left( {{p_u^*}} \right) \leq 0$}
        	\State Set $p_u' = p_u^*$
        \Else
        	\State Set $p_u'' = p_u^*$
        \EndIf
    \Until $p_u'' - p_u' \leq \epsilon$
    \State Set $p_u^* = \left( {{p_u'} + {p_u''}} \right)/2$
\EndIf
\end{algorithmic}
\end{algorithm}
In Algorithm~\ref{alg:bisection}, if ${\Omega _s}\left( {{P_u}} \right) > 0$, the algorithm will terminate in exactly $\left\lceil {{{\log }_2}\left( {{P_u}/\epsilon } \right)} \right\rceil$ iterations. Let ${\mP^*} = \left\{ {p_u^*,u \in \mU} \right\}$ denote the optimal uplink transmit power policy for a given task offloading policy $\mX$. Denote now as $\Gamma \left( {\mX, {\mP^*}} \right)$ the objective value of problem~\eqref{eq:UPA} corresponding to $\mP^*$. 

\subsection{Computing Resource Allocation~(CRA)}
The CRA problem optimizes the second term on the RHS of~\eqref{eq:V_new} and is expressed as follows,
\begin{subequations} \label{eq:CRA}
\begin{align}  \label{eq:CRA_a}
&\mathop {\min }\limits_{\mF}\sum\limits_{s \in \mS} {\sum\limits_{u \in {\mU_s}} \frac{\eta_u}{f_{us}}  }   \\ \label{eq:CRA_b}
&{\rm{s.t.}} \hspace{0.2cm} \sum\nolimits_{u \in \mU} {{f_{us}}}  \le {f_s}, \forall s \in \mS.
\end{align}
\end{subequations}
Notice that the constraint in~\eqref{eq:CRA_b} is convex. Denote the objective function in~\eqref{eq:CRA_a} as $\Lambda \left( {\mX,\mF} \right)$; by calculating the second-order derivatives of $\Lambda \left( {\mX,\mF} \right)$ w.r.t. $f_{us}$, we have,
\begin{subequations}
\begin{align}
\frac{{{\partial ^2}\Lambda \left( {\mX,\mF} \right)}}{{\partial f_{us}^2}} &= \frac{{2{\eta _u}}}{{f_{us}^3}} > 0, \forall s \in \mS, u\in \mU_s, \\
\frac{{{\partial ^2}\Lambda \left( {\mX,\mF} \right)}}{{\partial {f_{us}}\partial {f_{vw}}}} &= 0,\forall \left( {u,s} \right) \ne \left( {v,w} \right).
\end{align}
\end{subequations}
It can be seen that the Hessian matrix of the objective function in~\eqref{eq:CRA_a} is diagonal with the strictly positive elements, thus it is positive-definite. Hence, \eqref{eq:CRA} is a convex optimization problem and can be solved using Karush-Kuhn-Tucker~(KKT) conditions. In particular, the optimal computing resource allocation $f_{us}^*$ is obtained as,
\begin{equation} \label{eq:f_opt}
f_{us}^* = \frac{{{f_s}\sqrt {{\eta _u}} }}{{\sum\nolimits_{u \in {\mU_s}} {\sqrt {{\eta _u}} } }}, \forall s \in \mS,
\end{equation}
and the optimal objective function is calculated as,
\begin{equation} \label{eq:Delta}
\Lambda \left( {\mX, {\mF^*}} \right) = \sum\limits_{s \in \mS} {\frac{1}{{{f_s}}}{{\left( {\sum\nolimits_{u \in {\mU_s}} {\sqrt {{\eta _u}} } } \right)}^2}}.
\end{equation}
\subsection{Joint Task Offloading Scheduling and Resource Allocation} %\label{sec:joint_problem}
In the previous sections, for a given task offloading decision $\mX$, we obtained the solutions for the radio and computing resources allocation. In particular, according to~\eqref{eq:prob_RA}, \eqref{eq:sys_util_new}, \eqref{eq:V_new}, and \eqref{eq:Delta}, we have,
\begin{equation} \label{eq:JX}
{J^*}\left( \mX \right) = \sum\limits_{s \in \mS} {\sum\limits_{u \in {\mU_s}} {{\lambda _u}\left( {\beta _u^t + \beta _u^e} \right)} }  - \Gamma \left( {\mX,{\mP^*}} \right) - \Lambda \left( {\mX,{\mF^*}} \right),
\end{equation}
where $\mP^*$ can be obtained through Algorithm~\ref{alg:bisection} and $\Lambda \left( {\mX,{\mF^*}} \right)$ can be calculated using the closed-form expression in~\eqref{eq:Delta}. Now, using~\eqref{eq:JX}, we can rewrite the TO problem in~\eqref{eq:master} as,
\begin{subequations} \label{eq:setmax}
\begin{align} \label{eq:setmax_a}
&\hspace{-1cm} \mathop {\max }\limits_{\mX}\sum\limits_{s \in \mS} {\sum\limits_{u \in {\mU_s}} {{\lambda _u}\left( {\beta _u^t + \beta _u^e} \right)} }  - \Gamma \left( {\mX,{\mP^*}} \right) - \Lambda \left( {\mX,{\mF^*}} \right)  \\ \label{eq:setmax_b}
\rm{s.t.} \hspace{0.2cm} &{x_{us}^j} \in \left\{ {0,1} \right\}, \forall u \in \mU, s \in \mS, j \in \mN,\\ \label{eq:setmax_c}
&\sum\limits_{s \in \mS} {\sum\limits_{j \in \mN} {x_{us}^j} }  \le 1, \forall u \in \mU, \\ \label{eq:setmax_d}
&\sum\limits_{u \in \mU} {{x_{us}^j}}  \le 1, \forall s \in \mS, j \in \mN.
\end{align}
\end{subequations}
Problem~\eqref{eq:setmax} consists in maximizing a set function ${J^*}\left( \mX \right)$ w.r.t $\mX$ over the ground set $\mG$ defined by~\eqref{eq:setmax_b}, and the constraints in~\eqref{eq:setmax_c} and \eqref{eq:setmax_d} define two matroids over $\mG$. Due to the NP-hardness of such problem~\cite{lee2009non}, designing efficient algorithms that guarantee the optimal solution still remains an open issue. In general, a brute-force method using exhaustive search would require evaluating $2^n$ possible task offloading scheduling decisions, where $n = S\times U \times N$, which is clearly not a practical approach. 

\begin{routine}
\caption{\emph{remove} and \emph{exchange} operations}\label{routine:remove_exchange}
\begin{algorithmic}[1]
\Statex ${{remove}}\left( {\mX,x_{us}^j} \right)$
\State Set $\mX \leftarrow \mX \backslash \left\{ x_{us}^j\right\}$
\State Output: $\mX$
\vspace{.2cm}
\Statex ${{exchange}}\left( {\mX,x_{us}^j} \right)$
\For{$w \in \mS, i \in \mN$}
	\State $\mX \leftarrow \mX \backslash \left\{ x_{uw}^i\right\}$
\EndFor

\For{$v \in \mU$}
	\State $\mX \leftarrow \mX \backslash \left\{ x_{vs}^j\right\}$
\EndFor
\State Set $\mX \leftarrow \mX \cup \left\{ x_{us}^j\right\}$

\State Output: $\mX$
\end{algorithmic}
\end{routine}

To overcome the aforementioned drawback, we propose a low-complexity heuristic algorithm that can find a local optimum to problem~\eqref{eq:setmax} in polynomial time. Specifically, our algorithm starts with an empty set $\mX = \emptyset$ and repeatedly performs one of the local operations, namely the \emph{remove} operation or the \emph{exchange} operation, as described in Routine~\ref{routine:remove_exchange}, if it improves the set value $J^*(\mX)$. As we are dealing with two matroid constraints, the \emph{exchange} operation involves adding one element from outside of the current set and dropping up to $2$ elements from the set, so as to comply with the constraints. In summary, our proposed heuristic algorithm for task offloading scheduling is presented in Algorithm~\ref{alg:task_offloading}.

\begin{algorithm} 
\caption{Heuristic Task Offloading Scheduling} \label{alg:task_offloading}
\renewcommand{\Statex}{\item[\hphantom{\bfseries Step \arabic{ALG@line}.}]}
\begin{algorithmic}[1]
\State Initialize: $\mX = \emptyset$ 
% $\tilde \mX = \left\{ {x_{us}^j\left| {u \in \mU,s \in \mS,j \in \mN} \right.} \right\}$

\State Find $x_{kw}^i = \mathop {\arg \max }\limits_{x_{us}^j,j \in \mN,s \in \mS,u \in \mU} {J^*}\left( {\left\{ {x_{us}^j} \right\}} \right)$

% \State Find $v = \mathop {\arg \max }\limits_{x \in \tilde \mX} {J^*}\left( {\left\{ x \right\}} \right)$

\State Set $\mX \leftarrow \left\{ x_{kw}^i \right\}$

\If{\hspace{-0.1cm} there exists $x_{us}^j \in \mX$ such that ${J^*}\left( {{{remove}}\left( {\mX,x_{us}^j} \right)} \right) > \left( {1 + \frac{\epsilon }{{{n^2}}}} \right){J^*}\left( \mX \right)$}
	
    \State Set $\mX \leftarrow {{{remove}}\left( {\mX,x_{us}^j} \right)}$
	\State Go back to step 4
    
\ElsIf{there exists $x_{us}^j \in \mG \backslash \mX$ such that ${J^*}\left( {{{exchange}}\left( {\mX,x_{us}^j} \right)} \right) > \left( {1 + \frac{\epsilon }{{{n^2}}}} \right){J^*}\left( \mX \right)$}
	    \State Set $\mX \leftarrow {{{exchange}}\left( {\mX,x_{us}^j} \right)}$
		\State Go back to step 4
\EndIf
\State Output: $\mX$
\end{algorithmic}
\end{algorithm}

\remark{(Complexity Analysis of Algorithm~\ref{alg:task_offloading}) 
Parameter $\epsilon > 0$ in Algorithm~\ref{alg:task_offloading} is any value such that $\frac{1}{\epsilon}$ is at most a polynomial in $n$. Let ${\rm{Opt}}\left( \mG \right)$ be the optimal value of problem~\eqref{eq:setmax} over the ground set $\mG$. It is easy to see that ${{\rm{J}}^*}\left( {\left\{ {x_{kw}^i} \right\}} \right) \le {\rm{Opt}}\left( \mG \right)/n$ where ${x_{kw}^i}$ is the element with the maximum ${{\rm{J}}^*}\left( {\left\{ {x_{us}^j} \right\}} \right)$ over all elements of $\mG$. Let $t$ be the number of iterations for 
Algorithm~\ref{alg:task_offloading}. Since after each iteration the value of the function increases by a factor of at least $\left( {1 + \frac{\epsilon }{{{n^2}}}} \right)$, we have ${\left( {1 + \frac{\epsilon }{{{n^2}}}} \right)^t} \le n$, and thus $t = \m{O}\left( {\frac{1}{\epsilon}{n^2}\log n} \right)$. Note that the number of queries needed to calculate the value of the objective function in each iteration is at most $n$. Therefore, the running time of Algorithm~\ref{alg:task_offloading} is $\m{O}\left( {\frac{1}{\epsilon}{n^3}\log n} \right)$, which is polynomial in $n$.
}

\remark{(Solution of JTORA)
Let $\mX^*$ be the output of Algorithm~\ref{alg:task_offloading}. The corresponding solutions $\mP^*$ for the uplink power allocation and $\mF^*$ for computing resource sharing can be obtained using Algorithm~\ref{alg:bisection} and the closed-form expression in~\eqref{eq:f_opt}, respectively, by setting $\mX = \mX^*$. Thus, the local optimal solution for the JTORA problem is $\left( {{\mX^*},{\mP^*},{\mF^*}} \right)$. While characterizing the degree of suboptimality of the proposed solution is a non-trivial task---mostly due to the combinatorial nature of the task offloading decision and the nonconvexity of the original UPA problem---in the next section we will show via numerical results that our heuristic algorithm performs closely to the optimal solution using exhaustive search method.
}

%%%%%%%%%%%%%%%%%%%%%%%%%%%%%%%%%
% \section{BSs with Multiple Antennas}
% Uplink techniques for multi-antenna BSs: ZF equalizer at receiver side; orthogonal multiple access in LTE, e.g., OFDMA, flexible bandwidth/time allocation; channel assignment in OFDMA; maximum ratio combining -- Sect. 5.3 book LTE

%%%%%%%%%%%%%%%%%%%%%%%%%%%%%%%%%
\section{Performance Evaluation} \label{sec:results}
In this section, simulation results are presented to evaluate the performance of our proposed heuristic joint task offloading scheduling and resource allocation strategy, referred to as hJTORA. We consider a multi-cell cellular system consisting of multiple hexagonal cells with a BS in the center of each cell. The neighboring BSs are set $1~\rm{km}$ apart from each other. We assume that both the users and BSs use a single antenna for uplink transmissions. The channel gains are generated using a distance-dependent path-loss model given as $L\left[ {{\rm{dB}}} \right] = 140.7 + 36.7{\log _{10}}{d_{\left[ {{\rm{km}}} \right]}}$, and the log-normal shadowing variance is set to $8~\rm{dB}$. In most simulations, if not stated otherwise, we consider $S = 7$ cells and the users' maximum transmit power set to $P_u = 20~\rm{dBm}$. In addition, the system bandwidth is set to $B = 20~\rm{MHz}$ and the background noise power is assumed to be $\sigma^2 = -100~\rm{dBm}$. 

\begin{table*}[h!]
\renewcommand{\arraystretch}{1.5}
\caption{Runtime Comparison Among Competing Schemes}\label{tab:runtime}
\centering
\begin{tabular}{|c|c|c|c|c|c|}
\hline
 &{\textbf{IOJRA}} & {\textbf{GOJRA}} & {\textbf{DORA}} & {\textbf{hJTORA}} & {\textbf{Exhaustive}} \\
\hline
\textbf{Runtime~[\rm{ms}]} & $0.2 \pm 0.03$ & $1.8 \pm 0.2$ & $6.8 \pm 0.22$ & $19.3 \pm 0.7$ & $1,923 \pm 1.4$ \\
\hline
\end{tabular}
\end{table*}

In terms of computing resources, we assume the CPU capability of each MEC server and of each user to be $f_s = 20~\rm{GHz}$ and $f_u^l = 1~\rm{GHz}$, respectively.
% In terms of computing resources, we assume the CPU capability of each MEC server to be $f_s = 20\rm{GHz}$, while the computing capability $f_u^l$ of each user $u$ is randomly assigned from the set $\left\{ {0.5,0.8,1.0} \right\}\rm{GHz}$ to account for the heterogeneous computing capabilities of mobile devices.
According to the realistic measurements in~\cite{miettinen2010energy}, we set the energy coefficient $\kappa$ as $5 \times 10^{-27}$. 
For computation task, we consider the face detection and recognition application for airport security and surveillance~\cite{soyata2012cloud} which can be highly benefit from the collaboration between mobile devices and MEC platform. Unless otherwise stated, we choose the default setting values as $d_u = 420~\rm{KB}$, ${c_u} = 1000~\rm{Megacycles}$ (following~\cite{soyata2012cloud,chen2016efficient}), $\beta_u^t = 0.2$, $\beta_u^e = 0.8$, and $\lambda_u = 1$, $\forall u \in \mU$. In addition, the users are placed in random locations, with uniform distribution, within the coverage area of the network, and the number of sub-bands $N$ is set equal to the number of users per cell. 
% Unless otherwise stated, we set the input data size and workload of the face recognition application as $d_u = 420~\rm{KB}$ and ${c_u} = 1000~\rm{Megacycles}$, respectively, following~\cite{soyata2012cloud,chen2016efficient}
% In such application, input data size for computation offloading is $d_u = 420~\rm{KB}$ and the workload is ${c_u} = 1000~\rm{Megacycles}$~~\cite{soyata2012cloud,chen2016efficient}.    
% For the computation task, we consider the face recognition application in~\cite{soyata2012cloud}, where the input data size for computation offloading is $d_u = 420~\rm{KB}$ and the workload is ${c_u} \in \left\{ {500,1000,1500} \right\}~\rm{Megacycles}$. 
% \todo{this application pops out of the blue...why is it important? How does it benefit from MEC? Is it interactive/time-sensitive? Maybe mention here other similar applications.}
% Unless otherwise stated, we set $\beta_u^t = 0.2$, $\beta_u^e = 0.8$, and $\lambda_u = 1$, $\forall u \in \mU$.
% \todo{why $\lambda_u$'s are not all the same as first/simplest scenario (i.e., no service provider preference towards a specific user)?}. 
% Additionally, the users are placed in random locations, with uniform distribution, within the coverage area of the network, and the number of sub-bands $N$ is set equal to the number of users per cell. 
% \todo{can you run other experiment with another/more realistic distribution?}
We compare the system utility performance of our proposed hJTORA strategy against the following approaches.
\begin{itemize}[leftmargin=*]
\item \emph{Exhaustive}: This is a brute-force method that finds the optimal offloading scheduling solution via exhaustive search over $2^n$ possible decisions; since the computational complexity of this method is very high, we only evaluate its performance in a small network setting.
\item \emph{Greedy Offloading and Joint Resource Allocation~(GOJRA)}: All tasks (up to the maximum number that can be admitted by the BSs) are offloaded, as in~\cite{sardellitti2015joint}. In each cell, offloading users are \emph{greedily} assigned to sub-bands that have the highest channel gains until all users are admitted or all the sub-bands are occupied; we then apply joint joint resource allocation across the BSs as proposed in Sect.~\ref{sec:res_alloc}-A,~B.
\item \emph{Independent Offloading and Joint Resource Allocation~(IOJRA)}: Each user is randomly assigned a sub-band from its home BS, then the users independently make offloading decision~\cite{zhang2015collaborative}; joint resource allocation is employed.
\item \emph{Distributed Offloading and Resource Allocation~(DORA)}: Each BS independently makes joint task offloading decisions and resource allocation for users within its cell~\cite{lyu2016multi}.
\end{itemize}

%%%%%%%%%%%%%%%
\subsection{Suboptimality of Algorithm~\ref{alg:task_offloading}}
\begin{figure}
\centering
\hspace{-.5cm}
\includegraphics[width=.8\textwidth]{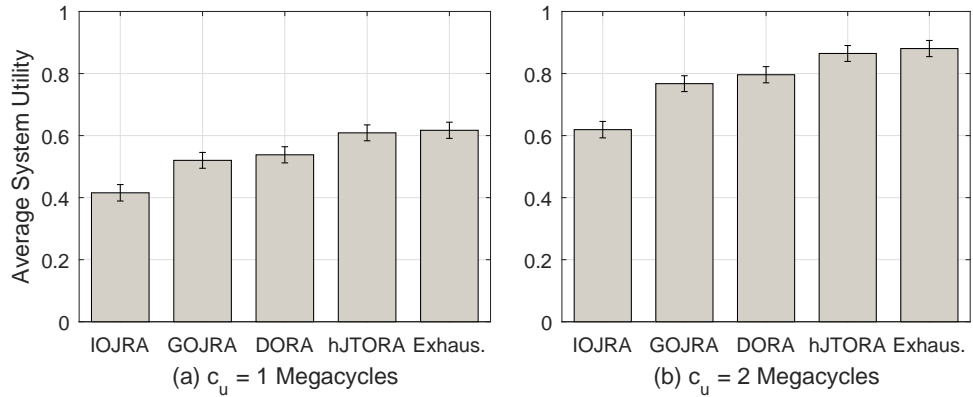}
\caption{Comparison of average system utility with $95\%$ confidence intervals.}\label{fig:subopt}
\end{figure}

% \begin{figure}
% \centering
% \includegraphics[width=0.35\textwidth]{fig/Optimality_300Drops_avg_new.eps}
% \caption{Comparison of average system utility with $95\%$ confidence intervals.}\label{fig:subopt}
% \end{figure}

% Firstly, we evaluate the performance of our proposed hJTORA strategy which uses a \emph{heuristic} Algorithm~\ref{alg:task_offloading} for finding the task offloading scheduling solution, compared with the following approaches:
% \begin{itemize}[leftmargin=*]
% \item \emph{Exhaustive}: This is a brute-force method that finds the optimal offloading scheduling solution via exhaustive search over $2^n$ possible decisions.
% \item \emph{Greedy}: In this scheme, each user either executes task locally or offloads it to the home BS. In each cell, offloading users are \emph{greedily} assigned to sub-bands that have highest channel gains. 
% \item \emph{Randomized}: Similar to the \emph{Greedy} scheme but the sub-bands are randomly assigned to offloading users.
% \end{itemize}

% \begin{table*}[h!]
% \renewcommand{\arraystretch}{1.5}
% \caption{Runtime Comparison Among Competing Schemes}\label{tab:runtime}
% \centering
% \begin{tabular}{|c|c|c|c|c|c|}
% \hline
%  &\textbf{IOJRA} & \textbf{GOJRA} & \textbf{DORA} & \textbf{hJTORA} & \textbf{Exhaustive} \\
% \hline
% \textbf{Runtime~[\rm{ms}]} & $0.2 \pm 0.03$ & $1.8 \pm 0.2$ & $6.8 \pm 0.22$ & $19.3 \pm 0.7$ & $1,923 \pm 1.4$ \\
% \hline
% \end{tabular}
% \end{table*}

Firstly, to characterize the suboptimality of our proposed hJTORA solution, we compare its performance with the optimal solution obtained by the \emph{Exhaustive} method, and then with the three other described baselines. Since the \emph{Exhaustive} method searches over all possible offloading scheduling decisions, 
% i.e., $2^n$ where $n = S\times U \times N$, 
its runtime is extremely long for a large number of variables; hence, we carry out the comparison in a small network setting with $U = 6$ users uniformly placed in the area covered by $S = 4$ cells, each having $N=2$ sub-bands. 
% The resource provider's preference is randomly set as $\left\{ {{\lambda _u}} \right\}_{u = 1}^6 = \left\{ {{\rm{0}}{\rm{.4170,0}}{\rm{.7203}},{\rm{0}}{\rm{.0001}},{\rm{0}}{\rm{.3023}},{\rm{0}}{\rm{.1468}},{\rm{0}}{\rm{.0923}}} \right\}$. 
We randomly generate $500$ large-scale fading (shadowing) realizations and the average system utilities (with $95\%$ confident interval) of different schemes are reported in Fig.~\ref{fig:subopt}(a,b) when we set $c_u = 1000$ and $2000~\rm{Megacycles}$, respectively. It can be seen that the proposed hJTORA performs very closely to that of the optimal \emph{Exhaustive} method, while significantly outperforms the other baselines. In both cases, the hJTORA algorithm achieves an average system utility within $2\%$ that of the \emph{Exhaustive} algorithm, while providing upto $13\%$, $17\%$, and $47\%$ gains over the DORA, GOJRA, and IOJRA schemes, respectively. Additionally, in Table~\ref{tab:runtime}, we report the average runtime per simulation drop of different algorithms, running on a Windows 7 desktop with $3.6~\rm{GHz}$ CPU and $16~\rm{GB}$ RAM. It can be seen that the \emph{Exhaustive} method takes very long time, about $100\times$ longer than the hJTORA algorithm for such a small network. The DORA algorithm runs slightly faster than hJTORA while IOJRA and GOJRA requires the lowest runtimes.

% \begin{table}[ht]
% \caption{Run-time Comparison \label{tab:runtime}}
% \begin{center}
% \begin{tabular}{lc}
% \hline \bottomrule \hline \\ [-1.5ex]
% Algorithm & Avg. Run-time\\ [.5ex]
% \hline \\[-1.5ex]
% IOJRA & $0.40~\rm{ms}$ \\ [.5ex]
% GOJRA & $2.60~\rm{ms}$ \\ [.5ex]
% DORA & $32.4~\rm{ms}$ \\ [.5ex]
% hJTORA & $39.4~\rm{ms}$ \\ [.5ex]
% Exhaustive & $3.63~\rm{s}$ \\  [1ex]
%  \bottomrule \\ \\ \\ \\
% \end{tabular}
% \end{center}
% \end{table}

% \begin{figure}
% \centering
% \includegraphics[width=0.35\textwidth]{fig/7Cells_nUsers_processed.eps}
% \caption{Performance comparison of different task offloading algorithms}\label{fig:subopt}
% \end{figure}

%-----------------------------------
\subsection{Effect of Number of Users}

% \begin{figure*}[t]
%  \centering
%  \begin{tabular}{ccc}
% \hspace*{-.3cm}\includegraphics[scale = .6]{fig/nUsers_load_05M.eps} &
% \hspace*{-.6cm}\includegraphics[scale = .6]{fig/nUsers_load_1M.eps} &
% \hspace*{-.6cm}\includegraphics[scale = .6]{fig/nUsers_load_15M.eps} \\
%  \small(a) & \small(b) & \small(c)
% \end{tabular}
% \caption{Comparison of average system utility against different number of users, evaluated with three different values of task's workload: (a) $c_u = 500~\rm{Megacycles}$, (b) $c_u = 1000~\rm{Megacycles}$, and (c) $c_u = 1500~\rm{Megacycles}$, $\forall u \in \mU$.}\label{fig:nUsers}
% \end{figure*}
\begin{figure}
\centering
\hspace*{-.2cm}
\includegraphics[width=0.90\textwidth]{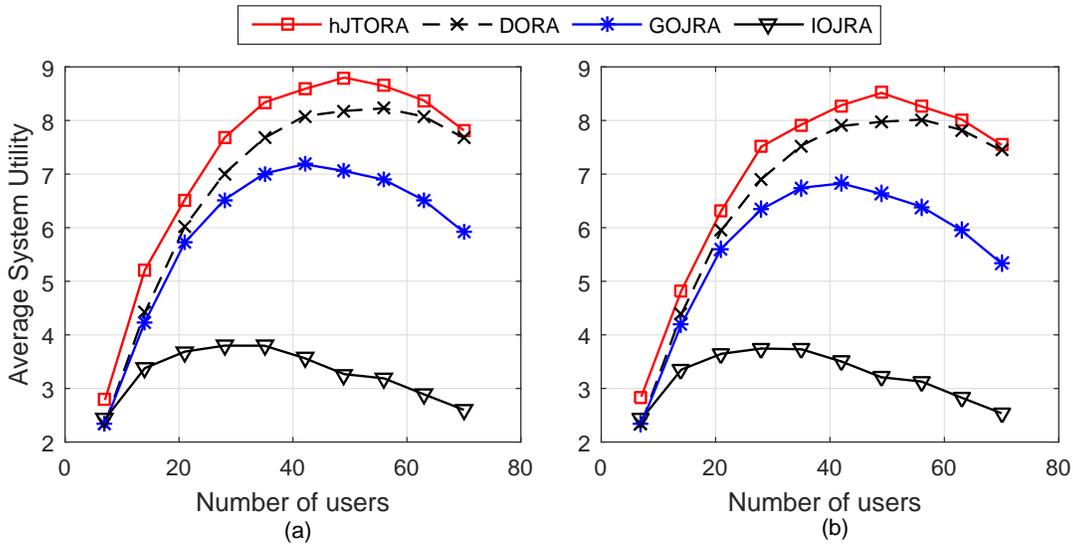}
\caption{Comparison of average system utility against different number of users, evaluated with two task workload distributions: (a) \textit{uniform}, $c_u = 1000~\rm{Megacycles}, \forall u \in \mU$, (b) \textit{non-uniform}, $c_u = 500~\rm{Megacycles}, \forall u$ in cells $\left\{ 1,3,5,7 \right\}$ and $c_u = 2000~\rm{Megacycles}, \forall u$ in cells $\left\{ 2,4,6 \right\}$.}\label{fig:nUsers}
\end{figure}

We now evaluate the system utility performance against different number of users wishing to offload their tasks, as shown in Fig.~\ref{fig:nUsers}(a,b). In particular, we vary the number of users per cell from $1$ to $10$ and perform the comparison in two scenarios with different task workload distribution: (a) \emph{uniform}, $c_u = 1000~\rm{Megacycles}$, and (b) \emph{non-uniform}, $c_u = 500~\rm{Megacycles}, \forall u$ in cells $\left\{ 1,3,5,7 \right\}$ and $c_u = 2000~\rm{Megacycles}$. Note that the number of sub-bands $N$ is set equal to the number of users per cell, thus the bandwidth allocated for each user decreases when there are more users in the system. Observe from Fig.~\ref{fig:nUsers}(a,b) that hJTORA always performs the best, and that the performance of all schemes significantly increases when the tasks' workload increases. This is because when the tasks require more computation resources the users will benefit more from offloading them to the MEC servers. We also observe in both scenarios that, when the number of users is small, the system utility increases with the number of users; however, when the number of users exceeds some threshold, the system utility starts to decrease. This is because when there are many users competing for radio and computing resources for offloading their tasks, the overheads of sending the tasks and executing them at the MEC servers will be higher, thus degrading the offloading utility.

%--------------------------------------------
\subsection{Effect of Task Profile}
Here, we evaluate the system utility performance w.r.t. to the computation tasks' profiles in terms of input size $d_u$'s and workload $c_u$'s. In Fig.~\ref{fig:taskprofile}(a,b), we plot the average system utility of the four competing schemes at different values of $c_u$ and $d_u$, respectively. It can be seen that the average system utilities of all schemes increase with task workload and decrease with the task input size. This implies that the tasks with small input sizes and high workloads benefit more from offloading than those with large input sizes and low workloads do. Moreover, we observe that the performance gains of the proposed hJTORA scheme over the baselines also follow the similar trend, i.e., increasing with task workloads and decreasing with task input size.  

%--------------------------------------------
\subsection{Effect of Users' Preferences}
Figure~\ref{fig:pTime} shows the average time and energy consumption of all the users when we increase the users' preference to time, $\beta_u^t$'s,  between $0.1$ and $0.9$ while at the same time decrease the users' preference to energy as $\beta_u^e = 1 - \beta_u^t, \forall u \in \mU$. It can be seen that the average time consumption decreases when $\beta_u^t$ increases, at the cost of higher energy consumption. In addition, when $U = 21$, the users experience a larger average time and energy consumption than in the case when $U = 14$. This is because when there are more users competing for the limited resources, the probability that a user can benefit from offloading its task is lower.

\begin{figure}
\centering
\hspace*{-.3cm}
\includegraphics[width=0.70\textwidth]{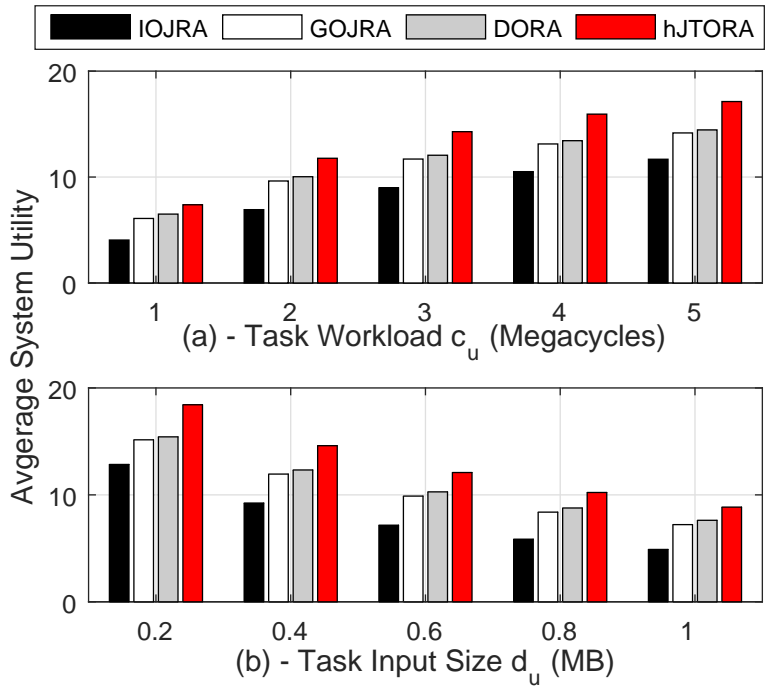}
\caption{Comparison of average system utility against (a) different task workloads, and (b) different sizes of task input; with $U = 28$.}\label{fig:taskprofile}
\end{figure}

\begin{figure}
\centering
\includegraphics[width=0.7\textwidth]{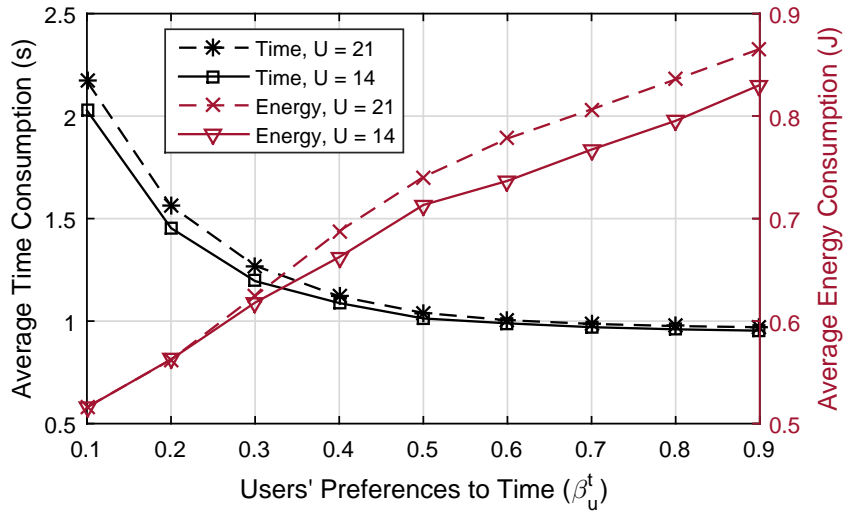}
\caption{Average time and energy consumption of all users obtained using hJTORA, with the number of users being $U = 14$ and $21$.}\label{fig:pTime}
\end{figure}

%--------------------------------------------
\subsection{Effect of Inter-cell Interference Approximation}
To test the effect of the approximation to model the inter-cell interference as in~\eqref{eq:Iapprox} in Sect.~\ref{sec:res_alloc}-A, we compare the results of the hJTORA solution to calculate the system utility using the approximated expression versus using the exact expression of the inter-cell interference. Figure~\ref{fig:approx} shows the system utility when the users' maximum transmit power $P_u$'s vary between $0$ and $35~\rm{dBm}$. It can be seen that the performance obtained using the approximation is almost identical to that of the exact expression when $P_u$ is below $25~\rm{dBm}$, while an increasing gap appears when $P_u > 25~\rm{dBm}$. However, as specified in LTE standard, 3GPP TS36.101 section 6.2.3\footnote{Refer to: 3GPP TS36.101, V14.3.0, Mar.~2017}, the maximum UE transmit power is $23~\rm{dBm}$; hence, we can argue that the proposed approximation can work well in practical systems. 

\begin{figure}
\centering
\includegraphics[width=0.6\textwidth]{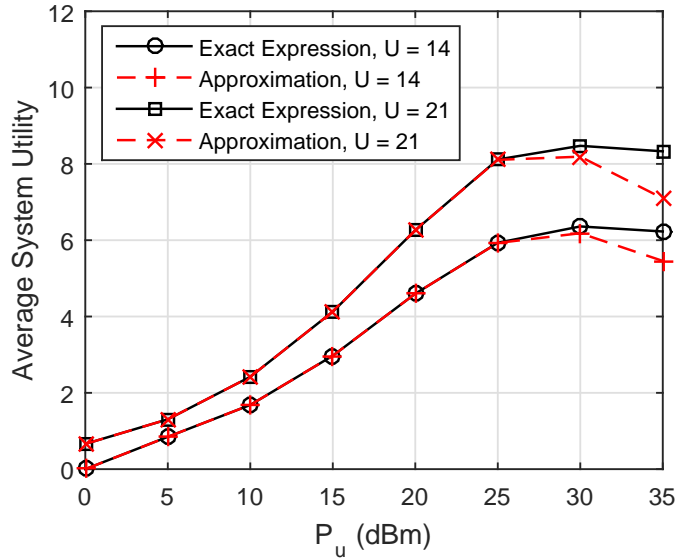}
\caption{Average system utility obtained by hJTORA solution with exact expression and approximation of the inter-cell interference.}\label{fig:approx}
\end{figure}

%%%%%%%%%%%%%%%%%%%%%%%%%%%%%%%%%%%%%%%%%%%%%%%%%%%%%
\section{Conclusions} \label{sec:conclusion}
We proposed a holistic strategy for a joint task offloading and resource allocation in a multi-cell Mobile-Edge Computing~(MEC) network. The underlying optimization problem was formulated as a Mixed-Integer Non-linear Program~(MINLP), which is NP-hard. Our approach decomposes the original problem into a Resource Allocation~(RA) problem with fixed task offloading decision and a Task Offloading~(TO) problem that optimizes the optimal-value function corresponding to the RA problem. We further decouple the RA problem into two independent subproblems, namely the uplink power allocation and the computing resource allocation, and address them using quasi-convex and convex optimization techniques, respectively. Finally, we proposed a novel heuristic algorithm that achieves a suboptimal solution for the TO problem in polynomial time. Simulation results showed that our heuristic algorithm performs closely to the optimal solution and significantly improves the average system offloading utility over traditional approaches. 

\section*{Appendix}
Firstly, it is straightforward to verify that ${\Gamma _s}\left( {{p_u}} \right)$ is twice differentiable on $\mathbb{R}$. We now check the second-order condition of a strictly quasi-convex function, which requires that a point $p$ satisfying ${\Gamma_s'}\left( {{p}} \right) = 0$ also satisfies ${\Gamma_s'}\left( {{p}} \right) > 0$~\cite{boyd2004convex}. 

The first-order and second-order derivatives of ${\Gamma _s}\left( {{p_u}} \right)$ can be calculated, respectively, as,
% \begin{equation}
% \Gamma _s'\left( {{p_u}} \right) = \frac{{{\psi _u}{{\log }_2}\left( {1 + {\vartheta _{us}}{p_u}} \right) - \frac{{{\vartheta _{us}}\left( {{\phi _u} + {\psi _u}{p_u}} \right)}}{{\ln 2\left( {1 + {\vartheta _{us}}{p_u}} \right)}}}}{{\log _2^2\left( {1 + {\gamma _{us}}} \right)}}
% \end{equation}
% \begin{subequations}
% \begin{align}
% {\Gamma _s'}\left( {{p_u}} \right) &= \frac{{{\psi _u}{C_u}\left( {{p_u}} \right) - \frac{{{\vartheta _{us}}{D_u}\left( {{p_u}} \right)}}{{{A_u}\left( {{p_u}} \right)\ln 2}}}}{{C_u^2\left( {{p_u}} \right)}}, \\
% \hspace{-.4cm} {\Gamma _s''}\left( {{p_u}} \right) &= \frac{{{\vartheta _{us}}\left[ {{G_{us}}\left( {{p_u}} \right){C_{us}}\left( {{p_u}} \right) + 2{\vartheta _{us}}{D_{us}}\left( {{p_u}} \right)/\ln 2} \right]}}{{A_{us}^2\left( {{p_u}} \right)C_{us}^3\left( {{p_u}} \right)\ln 2}},
% \end{align}
% \end{subequations}

\begin{equation} \label{eq:1st_derivative}
{\Gamma _s'}\left( {{p_u}} \right) = \frac{{{\psi _u}{C_u}\left( {{p_u}} \right) - \frac{{{\vartheta _{us}}{D_u}\left( {{p_u}} \right)}}{{{A_u}\left( {{p_u}} \right)\ln 2}}}}{{C_u^2\left( {{p_u}} \right)}},
\end{equation}
and
\begin{equation} \label{eq:2nd_derivative}
{\Gamma _s''}\left( {{p_u}} \right) = \frac{{{\vartheta _{us}}\left[ {{G_{us}}\left( {{p_u}} \right){C_{us}}\left( {{p_u}} \right) + 2{\vartheta _{us}}{D_{us}}\left( {{p_u}} \right)/\ln 2} \right]}}{{A_{us}^2\left( {{p_u}} \right)C_{us}^3\left( {{p_u}} \right)\ln 2}},
\end{equation}
in which, 
\begin{subequations}
\begin{align}
{A_{us}}\left( {{p_u}} \right) &= 1 + {\vartheta _{us}}{p_u}, \\
{C_{us}}\left( {{p_u}} \right) &= {\log _2}\left( {1 + {\vartheta _{us}}{p_u}} \right), \\
{D_{us}}\left( {{p_u}} \right) &= {\phi _u} + {\psi _u}{p_u}, \\
{G_{us}}\left( {{p_u}} \right) &= {\vartheta _{us}}{D_{us}}\left( {{p_u}} \right) - 2{\psi _u}{A_{us}}\left( {{p_u}} \right).
\end{align}
\end{subequations}
Suppose that ${{\bar p}_u} \in \left( {0,{P_u}} \right]$; to satisfy $\Gamma _s'\left( {{\bar p_u}} \right) = 0$, it must hold that,
% \begin{equation}
% {\Omega _s}\left( {{{\bar p}_u}} \right) = {\psi _u}{C_{us}}\left( {{p_u}} \right) - \frac{{{\vartheta _{us}}{D_{us}}\left( {{p_u}} \right)}}{{{A_{us}}\left( {{p_u}} \right)\ln 2}} = 0.
% \end{equation}
\begin{equation}
{\Omega _s}\left( {{{\bar p}_u}} \right) = {\psi _u}{\log _2}\left( {1 + {\vartheta _{us}}{{\bar p}_u}} \right) - \frac{{{\vartheta _{us}}\left( {{\phi _u} + {\psi _u}{{\bar p}_u}} \right)}}{{\left( {1 + {\vartheta _{us}}{{\bar p}_u}} \right)\ln 2}} = 0.
\end{equation}
By substituting $\bar p_u$ into~\eqref{eq:2nd_derivative}, we obtain, 
\begin{equation}
{\Gamma_s''}\left( {{{\bar p}_u}} \right) = \frac{{\vartheta _{us}^3D_{us}^2\left( {{{\bar p}_u}} \right)}}{{A_{us}^2\left( {{{\bar p}_u}} \right)C_{us}^3\left( {{{\bar p}_u}} \right){\psi _u}{{\ln }^2}2}}.
\end{equation}
It can be easily verified that both $\vartheta_{us}$ and ${D_{us}^2\left( {{{\bar p}_u}} \right)}$ are strictly positive $\forall {{\bar p}_u} \in \left( {0,{P_u}} \right]$. Hence, ${\Gamma_s''}\left( {{{\bar p}_u}} \right) > 0$, which confirms that ${\Gamma_s}\left( {{{p}_u}} \right)$ is a strictly quasi-convex function in $\left( {0,{P_u}} \right]$. 
% \todo{What is $\vartheta_{us}$?? I don't think it was defined?}

\balance

\bibliographystyle{ieeetr}\small
\bibliography{2017_JSAC}

\begin{thebibliography}{10}

\bibitem{hu2015mobile}
Y.~C. Hu, M.~Patel, D.~Sabella, N.~Sprecher, and V.~Young, ``{Mobile Edge
  Computing -- A Key Technology Towards 5G},'' {\em ETSI White Paper}, vol.~11,
  2015.

\bibitem{tran2016collaborative}
T.~X. Tran, A.~Hajisami, P.~Pandey, and D.~Pompili, ``Collaborative mobile edge
  computing in {5G} networks: New paradigms, scenarios, and challenges,'' {\em
  IEEE Communications Magazine}, vol.~55, no.~4, pp.~54--61, 2017.

\bibitem{soyata2012cloud}
T.~Soyata, R.~Muraleedharan, C.~Funai, M.~Kwon, and W.~Heinzelman,
  ``Cloud-vision: Real-time face recognition using a mobile-cloudlet-cloud
  acceleration architecture,'' in {\em in Proc. IEEE Symposium on Computers and
  Communications (ISCC)}, pp.~59--66, 2012.

\bibitem{yang2015multi}
L.~Yang, J.~Cao, H.~Cheng, and Y.~Ji, ``Multi-user computation partitioning for
  latency sensitive mobile cloud applications,'' {\em IEEE Trans. Comput.},
  vol.~64, no.~8, pp.~2253--2266, 2015.

\bibitem{cardellini2016game}
V.~Cardellini, V.~D.~N. Person{\'e}, V.~Di~Valerio, F.~Facchinei, V.~Grassi,
  F.~L. Presti, and V.~Piccialli, ``A game-theoretic approach to computation
  offloading in mobile cloud computing,'' {\em Mathematical Programming},
  vol.~157, no.~2, pp.~421--449, 2016.

\bibitem{chen2015decentralized}
X.~Chen, ``Decentralized computation offloading game for mobile cloud
  computing,'' {\em IEEE Trans. Parallel Distrib. Syst.}, vol.~26, no.~4,
  pp.~974--983, 2015.

\bibitem{chen2016efficient}
X.~Chen, L.~Jiao, W.~Li, and X.~Fu, ``Efficient multi-user computation
  offloading for mobile-edge cloud computing,'' {\em IEEE/ACM Trans. Netw.},
  vol.~24, no.~5, pp.~2795--2808, 2016.

\bibitem{yang2013framework}
L.~Yang, J.~Cao, Y.~Yuan, T.~Li, A.~Han, and A.~Chan, ``A framework for
  partitioning and execution of data stream applications in mobile cloud
  computing,'' {\em ACM SIGMETRICS Performance Evaluation Review}, vol.~40,
  no.~4, pp.~23--32, 2013.

\bibitem{rahimi2013music}
M.~R. Rahimi, N.~Venkatasubramanian, and A.~V. Vasilakos, ``Music:
  Mobility-aware optimal service allocation in mobile cloud computing,'' in
  {\em Proc. IEEE Int. Conf. on Cloud Computing}, pp.~75--82, 2013.

\bibitem{sardellitti2015joint}
S.~Sardellitti, G.~Scutari, and S.~Barbarossa, ``Joint optimization of radio
  and computational resources for multicell mobile-edge computing,'' {\em IEEE
  Trans. Signal Inf. Process. Over Netw.}, vol.~1, no.~2, pp.~89--103, 2015.

\bibitem{lyu2016multi}
X.~Lyu, H.~Tian, P.~Zhang, and C.~Sengul, ``Multi-user joint task offloading
  and resources optimization in proximate clouds,'' {\em IEEE Trans. Veh.
  Technol.}, vol.~PP, no.~99, 2016.

\bibitem{ge20165g}
X.~Ge, S.~Tu, G.~Mao, C.-X. Wang, and T.~Han, ``{5G} ultra-dense cellular
  networks,'' {\em IEEE Wireless Commun.}, vol.~23, no.~1, pp.~72--79, 2016.

\bibitem{nokiaRACS}
{Intel and Nokia Siemens Networks}, ``Increasing mobile operators' value
  proposition with edge computing,'' {\em Technical Brief}, 2013.

\bibitem{sagunaMEC}
Saguna and Intel, ``Using mobile edge computing to improve mobile network
  performance and profitability,'' {\em White paper}, 2016.

\bibitem{tran2017wons}
T.~X. Tran, P.~Pandey, A.~Hajisami, and D.~Pompili, ``{Collaborative
  Multi-bitrate Video Caching and Processing in Mobile-Edge Computing
  Networks},'' in {\em Proc. IEEE/IFIP Conf. on Wireless On-demand Network
  Systems and Services (WONS)}, pp.~165--172, 2017.

\bibitem{fajardo2015improving}
J.~O. Fajardo, I.~Taboada, and F.~Liberal, ``Improving content delivery
  efficiency through multi-layer mobile edge adaptation,'' {\em IEEE Network},
  vol.~29, no.~6, pp.~40--46, 2015.

\bibitem{nokiaCar2x}
Nokia, ``{LTE and Car2x: Connected cars on the way to 5G}.'' [Online]:
  {http://www.cambridgewireless.co.uk/Presentation/MB06.04.16-Nokia-Uwe
  Putzschler.pdf}.

\bibitem{mao2017mobile}
Y.~Mao, C.~You, J.~Zhang, K.~Huang, and K.~B. Letaief, ``Mobile edge computing:
  Survey and research outlook,'' {\em arXiv preprint arXiv:1701.01090}, 2017.

\bibitem{sanaei2014heterogeneity}
Z.~Sanaei, S.~Abolfazli, A.~Gani, and R.~Buyya, ``Heterogeneity in mobile cloud
  computing: taxonomy and open challenges,'' {\em IEEE Commun. Surveys Tuts.},
  vol.~16, no.~1, pp.~369--392, 2014.

\bibitem{zhang2013energy}
W.~Zhang, Y.~Wen, and D.~O. Wu, ``Energy-efficient scheduling policy for
  collaborative execution in mobile cloud computing,'' in {\em Proc. IEEE Int.
  Conf. on Comput. Commun. (INFOCOM)}, pp.~190--194, 2013.

\bibitem{zhang2015collaborative}
W.~Zhang, Y.~Wen, and D.~O. Wu, ``Collaborative task execution in mobile cloud
  computing under a stochastic wireless channel,'' {\em IEEE Trans. Wireless
  Commun.}, vol.~14, no.~1, pp.~81--93, 2015.

\bibitem{cheng2015just}
Z.~Cheng, P.~Li, J.~Wang, and S.~Guo, ``Just-in-time code offloading for
  wearable computing,'' {\em IEEE Trans. Emerg. Topics Comput.}, vol.~3, no.~1,
  pp.~74--83, 2015.

\bibitem{mao2016dynamic}
Y.~Mao, J.~Zhang, and K.~B. Letaief, ``Dynamic computation offloading for
  mobile-edge computing with energy harvesting devices,'' {\em IEEE J. Sel.
  Areas in Commun.}, vol.~34, no.~12, pp.~3590--3605, 2016.

\bibitem{miettinen2010energy}
A.~P. Miettinen and J.~K. Nurminen, ``Energy efficiency of mobile clients in
  cloud computing,'' in {\em Proc. USENIX Conf. Hot Topics Cloud Comput.
  (HotCloud)}, June 2010.

\bibitem{wen2012energy}
Y.~Wen, W.~Zhang, and H.~Luo, ``Energy-optimal mobile application execution:
  Taming resource-poor mobile devices with cloud clones,'' in {\em Proc. IEEE
  INFOCOM}, pp.~2716--2720, 2012.

\bibitem{dahlman20134g}
E.~Dahlman, S.~Parkvall, and J.~Skold, {\em 4G: LTE/LTE-advanced for mobile
  broadband}.
\newblock Academic press, 2013.

\bibitem{saad2014college}
W.~Saad, Z.~Han, R.~Zheng, M.~Debbah, and H.~V. Poor, ``A college admissions
  game for uplink user association in wireless small cell networks,'' in {\em
  Proc. IEEE INFOCOM}, pp.~1096--1104, 2014.

\bibitem{ye2013user}
Q.~Ye, B.~Rong, Y.~Chen, M.~Al-Shalash, C.~Caramanis, and J.~G. Andrews, ``User
  association for load balancing in heterogeneous cellular networks,'' {\em
  IEEE Trans. Wireless Commun.}, vol.~12, no.~6, pp.~2706--2716, 2013.

\bibitem{pochet2006production}
Y.~Pochet and L.~A. Wolsey, {\em Production planning by mixed integer
  programming}.
\newblock Springer Science \& Business Media, 2006.

\bibitem{du2014wireless}
Y.~Du and G.~De~Veciana, ````{W}ireless networks without edges": Dynamic radio
  resource clustering and user scheduling,'' in {\em Proc. IEEE Int. Conf. on
  Comput. Commun. (INFOCOM)}, pp.~1321--1329, 2014.

\bibitem{boyd2004convex}
S.~Boyd and L.~Vandenberghe, {\em Convex optimization}.
\newblock Cambridge university press, 2004.

\bibitem{bereanu1972quasi}
B.~Bereanu, ``Quasi-convexity, strictly quasi-convexity and pseudo-convexity of
  composite objective functions,'' {\em Revue fran{\c{c}}aise d'automatique,
  informatique, recherche op{\'e}rationnelle. Math{\'e}matique}, vol.~6, no.~1,
  pp.~15--26, 1972.

\bibitem{lee2009non}
J.~Lee, V.~S. Mirrokni, V.~Nagarajan, and M.~Sviridenko, ``Non-monotone
  submodular maximization under matroid and knapsack constraints,'' in {\em
  Proc. Annual ACM Symp. Theory of Comput.}, pp.~323--332, 2009.

\end{thebibliography}
% \todo{check again the references, there is always some info missing/incorrect/misspelled}

\end{document}